\keywords{refinement of higher-order complexity theory}
\DeclareFontFamily{U}{mathb}{}
\DeclareFontShape{U}{mathb}{m}{n}{
  <-5.5> mathb5
  <5.5-6.5> mathb6
  <6.5-7.5> mathb7
  <7.5-8.5> mathb8
  <8.5-9.5> mathb9
  <9.5-11.5> mathb10
  <11.5-> mathbb12
}{}
\DeclareRobustCommand{\compthree}{%  third kind of polynomial composition 
  \mathbin{\text{\usefont{U}{mathb}{m}{n}\symbol{"0D}}}%
}
\newcommand{\semantic}[1]{\overline{#1}}
\newcommand{\arctic}[1]{\tilde{#1}}  %arctic polynomial
\newcommand{\monotone}[1]{(#1{\scriptstyle\nearrow}#1)}
\newcommand{\monotwo}[1]{\big((#1{\scriptstyle\nearrow}#1)\nearrow(#1{\scriptstyle\nearrow}#1)\big)}
\newcommand{\MONOTONE}{\monotone{\IN}}
\newcommand{\MONOTWO}{\monotwo{\IN}}
\newcommand{\Monotone}{(\IN{\scriptstyle\nearrow}\IN)}
\newcommand{\TWO}{\{\textup{\texttt{0}},\textup{\texttt{1}}\}}
\newcommand{\calD}{\mathcal{D}}
\newcommand{\calF}{\mathcal{F}}
\newcommand{\calO}{\mathcal{O}}
\newcommand{\calM}{\mathcal{M}}
\newcommand{\calN}{\mathcal{N}}
\newcommand{\calP}{\mathcal{P}}
\newcommand{\calQ}{\mathcal{Q}}
\newcommand{\calR}{\mathcal{R}}
\newcommand{\calT}{\mathcal{T}}
\newcommand{\frakP}{\mathfrak{P}}
\newcommand{\frakQ}{\mathfrak{Q}}
\newcommand{\IZ}{\mathbb{Z}}
\newcommand{\IN}{\mathbb{N}}
\newcommand{\Card}{\operatorname{Card}}
\newcommand{\Deg}{\operatorname{Deg}}
\newcommand{\DEG}{\operatorname{DEG}}
\newcommand{\SyntaxEQ}{\mathrel{\vcenter{\offinterlineskip \hbox{$\sim$}\vskip-.35ex\hbox{$\sim$}\vskip-.35ex\hbox{$\sim$}}}}
\newcommand{\SyntaxLEQ}{\lessapprox}
\definecolor{myblue}{rgb}{0, 0.4, 0.6}
\definecolor{myred}{rgb}{1.0, 0.4, 0.3}
\newcommand{\classP}{\text{\sf P}\xspace}
\newcommand{\classPSPACE}{\text{\sf PSPACE}\xspace}
\def\ie{{\em i.e.}\xspace}
\begin{document}

\title{Degrees of Second and Higher-Order Polynomials}
\thanks{Supported by the National Research Foundation of Korea (grant NRF-2017R1E1A1A03071032)
and the International Research \& Development Program of
the Korean Ministry of Science and ICT (grant NRF-2016K1A3A7A03950702).}
\author[Donghyun Lim]{Donghyun Lim\lmcsorcid{0000-0001-8530-8466}}[a,b]
\author[Martin Ziegler]{Martin Ziegler\lmcsorcid{0000-0001-6734-7875}}[a]
\address{KAIST School of Computing, Republic of Korea}
\email{klimdhn@kaist.ac.kr, ziegler@kaist.ac.kr}
\address{corresponding author}

%%%%%%%%%%%%%%%%%%%%%%%%%%%%%%%%%%%%%%%%%%%%%%%%%%%%%%%%%%%%%%%%%%%%%%%%%%%

\begin{abstract}
Second-order polynomials generalize classical (=first-order) ones in allowing for 
additional variables that range over functions rather than values.
We are motivated by their applications in higher-order computational complexity theory,
extending for example discrete classes like $\classP$ or $\classPSPACE$ to operators in Analysis
[\href{http://doi.org/10.1137/S0097539794263452}{doi:10.1137/S0097539794263452}, \href{http://doi.org/10.1145/2189778.2189780}{doi:10.1145/2189778.2189780}].

The degree subclassifies ordinary polynomial growth into linear, quadratic, cubic etc.
To similarly classify second-order polynomials,
define their degree by structural induction as an `arctic' first-order polynomial 
(namely a term/expression over variable $D$ and operations $+$ and $\cdot$ \emph{and} $\max$).
This degree turns out to transform as nicely under (now two kinds of) polynomial composition
as the ordinary one.
We also establish a normal form and semantic uniqueness for second-order polynomials.

Then we define the degree of a third-order polynomial
to be an arctic second-order polynomial,
and establish its transformation under three kinds of composition.
\end{abstract}

\maketitle

\begin{center}\begin{minipage}[c]{0.9\textwidth}
\setcounter{tocdepth}{3}
\renewcommand{\contentsname}{}\tableofcontents
\end{minipage}\end{center}

%%%%%%%%%%%%%%%%%%%%%%%%%%%%%%%%%
\section{Introduction}
\label{s:Intro}
Polynomial (as opposed to, say, exponential) growth 
is investigated in areas such as 
Chemistry (reaction kinetics) and
Mathematics (Gromov's theorem) and of course
Computer Science (Cobham–Edmonds Thesis).
The \emph{degree} of polynomial growth provides a refined classification
into linear, quadratic, cubic, quartic, quintic etc.
It applies to polynomials in one or several variables
that range, say, over real or natural numbers.
For example $15D^3+2D+4$ is a polynomial of degree 3 over $\IN$ in one variable $D$.

\medskip
So-called second-order polynomials, involving an additional variable ranging over functions
(instead of values, \ie, one step up the type hierarchy), have turned
out as useful: for example to characterize computational complexity classes and reductions 
on higher types \cite{Mehlhorn,Kapron,Kawamura,NeumannSteinberg}.

\begin{exa}
\label{x:Second}
\[ %\begin{equation} \label{e:Second}
\Lambda\Big(\Lambda^3\big(\Lambda^5(N)\big)\Big)\cdot\big(\Lambda(N^2)+N^9\big)\cdot N^4 \;+\;
N^{999}\cdot\Lambda\big(3N^5+4\Lambda^8(N+2)\cdot\Lambda(7N)+\Lambda^6(1)\big) \;+\; \Lambda^{50}(N^{9})
\] %\end{equation}
is a second-order polynomial over $\IN$ in first-order variable $N$
and second-order variable $\Lambda$.
\end{exa}
The present work formalizes (Definition~\ref{d:Degree2}) and 
investigates (Proposition~\ref{p:Degree2}) a generalized notion of \emph{degree}:
to provide for a refinement of second-order polynomial growth
(Figure~\ref{f:Hierarchy} on page~\pageref{f:Hierarchy}),
similarly to the classical degree for first-order polynomials.

\begin{exa}
\label{x:Arctic}
The degree of the second-order polynomial from Example~\ref{x:Second} turns out as
\begin{equation}\label{e:Arctic}
\max\big\{ \; D\cdot(3D)\cdot(5D)+\max\{2D,9\}+4\;,\; 999+D\cdot\max\{5,8D+D\} \;,\; 450\cdot D \; \big\}
\end{equation}
As opposed to classical polynomials, it involves the $\max()$ operation.
However for all $D\geq6$, 
it semantically coincides with the above simple cubic polynomial $15D^3+2D+4$.
\end{exa}
In Section~\ref{s:Third} we step further up the type hierarchy to third-order polynomials, 
and to their degrees as second-order polynomials.

%%%%%%%%%%%%%%%%%%%%%%%%%%%%%%%
\subsection{Related Work}
\label{ss:Related}

The second-order polynomial degree had been proposed in \cite{Dagstuhl},
but there lacked proof of its purported (and Lemma~1a+b in fact erroneous) properties---as 
well as missing the question of well-definition.
We here establish well-definition of said degrees by means of structural (\ie syntactic) induction,
and we establish a normal form (Theorem~\ref{t:Normal}) and semantic uniqueness Theorem~\ref{t:Donghyun}.
A sketch of the latter was previously disseminated in \cite{Donghyun}. 

\cite{Linear} defined and investigated \emph{linear} second-order polynomials;
see Remark~\ref{r:Linear} below.

\begin{rem}
\label{r:Semantics}
Polynomials (in variables $\vec X=(X_1,\ldots,X_M)$ over some commutative semi/ring $\calR$, say)
are defined syntactically as a family of 
well-formed expressions (over $\vec X$ and $\calR$).
Logically speaking, they are precisely the elements of the \emph{term language} induced by the structure $\calR$.
Each such polynomial $p\in\calR[\vec X]$ gives rise semantically 
to a unique total function $\semantic{p}:\calR^M\to\calR$.

Questions about the converse direction (namely from functions/semantics to syntactic polynomials)
have spurred surprisingly interesting research in various settings.
\cite{Specker} for instance investigates
which multivariate functions over a ring $\calR$ \emph{with} zero-divisors can be represented as polynomials at all.
In case $\calR$ is an algebraically closed field, Hilbert's \emph{Nullstellensatz} characterizes the 
\emph{non-}uniqueness of polynomial representations of a multivariate function on some 
algebraic variety over $\calR$.
\end{rem}
Each multivariate polynomial can be rewritten syntactically equivalently
(\ie using Commutative, Associative, and Distributive Laws)
as a linear combination of monomials $X_1^{d_1}\cdots X_M^{d_M}$
of lexicographically strictly increasing multidegrees $(d_1,\ldots,d_M)$;
and for $\calR$ a sufficiently large integral domain,
this syntactically unique representation is also semantically unique:

\begin{fact}[Schwartz/Zippel]
\label{f:SchwartzZippel}
Let $p_1(\vec X),\ldots,p_K(\vec X)$ denote pairwise syntactically non-equivalent
polynomials in variables $\vec X=(X_1,\ldots,X_M)$ over integral domain $\calR$.
Suppose that each polynomial has total degree $\deg(p_k)\leq d$.
Furthermore let $X_1,\ldots,X_M\subseteq\calR$ 
each have cardinality $\Card(X_m)> d\cdot K\cdot (K-1)/2$.

Then there exists an assignment $\vec x=(x_m)_{_m}\in\prod_m X_m$ 
that makes the values $\semantic{p_k}(\vec x)$ pairwise distinct for $k=1,\ldots,K$.
\end{fact}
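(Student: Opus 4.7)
The plan is to reduce the claim to the classical (single-polynomial) Schwartz--Zippel Lemma via a union bound over the $K(K-1)/2$ index pairs. First, for each $1\le k<k'\le K$, form the formal difference $q_{k,k'}:=p_k-p_{k'}\in\calR[\vec X]$. Since $p_k$ and $p_{k'}$ are syntactically non-equivalent and, because $\calR$ is an integral domain, $\calR[\vec X]$ is also an integral domain, each $q_{k,k'}$ is a nonzero formal polynomial of total degree at most $d$. Separating all of $\semantic{p_1},\ldots,\semantic{p_K}$ at a single point $\vec x\in\prod_m X_m$ is then equivalent to finding $\vec x$ that lies outside the zero set of every $q_{k,k'}$ simultaneously.

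Next, I would invoke the standard Schwartz--Zippel bound: if $\vec x$ is sampled uniformly from $\prod_m X_m$, then for each nonzero $q_{k,k'}$ of degree at most $d$,
\[
\Prob\bigl[\,\semantic{q_{k,k'}}(\vec x)=0\,\bigr] \;\le\; \frac{d}{\min_m \Card(X_m)}.
\]
This follows by the textbook induction on $M$, viewing $q_{k,k'}$ as a polynomial in $X_M$ over $\calR[X_1,\ldots,X_{M-1}]$ and splitting according to whether its leading coefficient vanishes at $(x_1,\ldots,x_{M-1})$. A union bound over the $K(K-1)/2$ pairs then gives
\[
\Prob\bigl[\,\exists\, k<k' : \semantic{p_k}(\vec x)=\semantic{p_{k'}}(\vec x)\,\bigr] \;\le\; \frac{d\cdot K(K-1)/2}{\min_m \Card(X_m)} \;<\; 1,
\]
where the final strict inequality uses the hypothesis $\Card(X_m)>d\cdot K(K-1)/2$ for every $m$. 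Hence some assignment $\vec x\in\prod_m X_m$ separates all the $\semantic{p_k}$.

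I expect no substantive obstacle. The only non-mechanical step is the passage from \emph{syntactic} non-equivalence to \emph{formal} nonvanishing of $p_k-p_{k'}$, which is immediate once one fixes a canonical monomial expansion (precisely the lexicographically sorted linear combination of monomials described in the paragraph preceding the statement); under that canonical form, syntactic equivalence is the same as coefficient-wise equality. Everything else is the standard Schwartz--Zippel argument together with a union bound, so the interest of the statement lies in its role as a tool for the uniqueness results later in the paper rather than in any novelty of proof.
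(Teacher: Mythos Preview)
Your proof is correct and matches the paper's own argument: the paper explicitly derives the fact either by applying classical Schwartz--Zippel to the single product polynomial $\prod_{k\neq k'}(p_k-p_{k'})$ of degree $\leq d\cdot K(K-1)/2$, or, ``alternatively,'' by exactly the probabilistic union bound you carry out. Your treatment of the passage from syntactic non-equivalence to formal nonvanishing of $p_k-p_{k'}$ is also in line with the discussion preceding the statement.
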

Recall that the total degree of monomial $X_1^{d_1}\cdots X_M^{d_M}$ 
is the Manhattan norm $d_1+\cdots+d_M$ of its multidegree $(d_1,\ldots,d_M)$.
Fact~\ref{f:SchwartzZippel} follows from the classical case $K=2$ \cite{Lipton}
by considering $p:=\prod_{k\neq k'} (p_k-p_{k'})$ 
of total degree $\deg(p)\leq d\cdot K\cdot (K-1)/2$.
Alternatively, use a simple union bound in order 
to adapt the classical estimate for the probability
of two different polynomials to agree on some argument
to the case of several polynomials.

\begin{rem}
\label{r:Lambda}
Second and higher-order polynomials (Section~\ref{s:Third})
are related to terms in (typed) Lambda Calculus over $\IN$
\cite{Lambda}.
The \emph{Church-Rosser Theorem} establishes a normal form for the latter.
%which might (or might not) imply our Theorem~\ref{t:Donghyun}.
Conversely, our second and higher order polynomial degrees 
could serve to stratify expressions in Lambda Calculus;
cmp. Figure~\ref{f:Hierarchy} on page~\pageref{f:Hierarchy}.
\end{rem}
The polynomial degree may be regarded as instance of a
(negative) \emph{valuation} in the sense of Algebraic Number Theory.

%%%%%%%%%%%%%%%%%%
\section{Second-Order Polynomials}
\label{s:Second}

We are concerned with (`univariate') \emph{second}-order polynomials,
that is, involving a function\emph{al} variable
as placeholder for functions $\calR\to\calR$,
in addition to the ordinary/first-order variable
as placeholder for values from $\calR$.

\begin{defi}
\label{d:Second}
A second-order polynomial $P=P(N,\Lambda)$ over natural numbers $\IN$
is defined as member of the least class of formal expressions (=terms) that 
include constant 1 and variable $N$ and 
are closed under binary addition $+$ and product $\cdot$.
Moreover, whenever $P$ is a second-order polynomial,
then so is $\Lambda(P)$.

Semantically, variable $N$ may attain values $\semantic{N}=n$ ranging over $\IN$,
and $\semantic{\Lambda}$ ranges over $\MONOTONE$:
the set of nondecreasing total functions $\ell:\IN\nearrow\IN$.
Continuing structural induction, $\semantic{\Lambda(Q)}$ 
evaluates to $\ell\big(\semantic{Q}(n,\ell)\big)$.
\end{defi}
Recall Example~\ref{x:Second}.

\medskip
Naturally, Commutative and Associative and Distributive Laws extend from natural numbers
to (both first-order and) second-order polynomials over $\IN$.
Let us denote by ``$\SyntaxEQ$'' the equivalence relation induced by the following rules:
\begin{gather}
\label{e:Syntax}
P+Q\SyntaxEQ Q+P, \quad 
(P+Q)+R\SyntaxEQ P+(Q+R), 
\nonumber\\ 
P\cdot Q\SyntaxEQ Q\cdot P, \quad
(P\cdot Q)\cdot R\SyntaxEQ P\cdot(Q\cdot R),\quad 
P \cdot 1 \SyntaxEQ P, 
\nonumber \\
P\cdot(Q+R)\SyntaxEQ P\cdot Q+P\cdot R,
\nonumber \\
\text{ if } P \SyntaxEQ P' \text{ and } Q \SyntaxEQ Q',
\text{ then } P+Q \SyntaxEQ P'+Q',
\nonumber \\
\text{ if } P \SyntaxEQ P' \text{ and } Q \SyntaxEQ Q',
\text{ then } P\cdot Q \SyntaxEQ P'\cdot Q',
\nonumber \\
\text{ if } P \SyntaxEQ P',
\text{ then } \Lambda(P) \SyntaxEQ \Lambda(P') \enspace .
\end{gather}

But are these rules semantically `complete' in the sense that,
conversely, any two second-order polynomials $P,Q$ that agree
on all possible assignments can be converted from one to each other syntactically?
The following statement asserts exactly that:

\begin{thm}
\label{t:Donghyun}
Let $P_1(N,\Lambda),\ldots,P_K(N,\Lambda)$
denote second-order polynomials over $\IN$
that are pairwise non-equivalent syntactically:
\[ \forall k<k': \quad P_k(N,\Lambda)\;\not\SyntaxEQ\; P_{k'}(N,\Lambda) \enspace . \]
Then there exists an assignment $n\in\IN$ and 
$\ell\in\!\MONOTONE$ that makes
them evaluate (i.e. semantically) pairwise distinct:
\[ \forall k<k': \quad \semantic{P_k}(n,\ell)\;\neq \semantic{P_{k'}}(n,\ell) \enspace . \]
\end{thm}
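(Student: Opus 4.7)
\medskip\noindent
\textbf{Plan.} I will reduce to the first-order Schwartz--Zippel bound (Fact~\ref{f:SchwartzZippel}) by induction on the maximal $\Lambda$-nesting depth $d$ of the given polynomials. For $d=0$ all $P_k\in\IN[N]$ are first-order, so Fact~\ref{f:SchwartzZippel} itself supplies the required $n$ (and any monotone $\ell$ works).

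For the inductive step I take $\Lambda(R_1),\ldots,\Lambda(R_S)$ to be representatives of the $\SyntaxEQ$-classes of all outermost $\Lambda$-subterms occurring in any $P_k$; each $R_s$ has depth $\leq d$. Substituting a fresh first-order variable $Z_s$ for every occurrence of $\Lambda(R_s)$ converts each $P_k$ into a first-order polynomial $\hat P_k(N,Z_1,\ldots,Z_S)\in\IN[N,Z_1,\ldots,Z_S]$. The key syntactic step to verify here is that $P_k\not\SyntaxEQ P_{k'}$ forces $\hat P_k\not\SyntaxEQ\hat P_{k'}$: any chain of first-order rewrites witnessing $\hat P_k\SyntaxEQ\hat P_{k'}$ would lift, via resubstitution of $\Lambda(R_s)$ for $Z_s$, to a chain of rewrites~\eqref{e:Syntax} between $P_k$ and $P_{k'}$.

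Next I apply the induction hypothesis to the pairwise non-equivalent $R_1,\ldots,R_S$, but in the strengthened form ``the separating $n$ are \emph{cofinite} in $\IN$''---a form that carries through the induction since the relevant semantic inequalities are witnessed by non-vanishing of some polynomial in $N$. This gives a cofinite set of admissible $n$'s together with a monotone $\ell^\ast$ (chosen to grow fast enough that each $r_s:=\semantic{R_s}(n,\ell^\ast)$ strictly exceeds every argument at which $\ell^\ast$ is probed while evaluating the $R_s$) making $r_1,\ldots,r_S$ pairwise distinct, and I reorder them to be strictly increasing. I then restrict $n$ further to avoid the finitely many values at which $\hat P_k(n,\vec Z)$ and $\hat P_{k'}(n,\vec Z)$ would become identically equal as polynomials in $\vec Z$. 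Applying Fact~\ref{f:SchwartzZippel} to the resulting pairwise distinct $\hat P_k(n,\cdot)$, with each $Z_s$ confined to an arbitrarily large interval $X_s\subset\IN$ (translated upward beforehand so that $\max X_s<\min X_{s+1}$), yields a separating tuple $z_s\in X_s$ that is automatically monotone in $s$. Defining $\ell$ as the monotone extension of $\ell^\ast$ by $\ell(r_s):=z_s$ then gives $\semantic{P_k}(n,\ell)=\hat P_k(n,z_1,\ldots,z_S)$, pairwise distinct.

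\medskip\noindent
\textbf{Main obstacle.} The delicate point is the circular dependency in the inductive step: the outer separating values $r_s$ are themselves determined by the very $\ell$ that is under construction. Breaking the loop is what forces both the depth stratification and the requirement that $\ell^\ast$ grow fast enough for the $r_s$ to sit strictly above every argument probed in their own evaluation---this is precisely what allows redefining $\ell^\ast$ at the $r_s$ without perturbing the $r_s$ themselves. The same tension also forces the inductive hypothesis to assert \emph{cofinite} rather than mere existential selection of $n$, so that the outer separation step can be applied at some $n$ compatible with both the inner induction and the non-degeneracy of the $\hat P_k(n,\vec Z)$ as polynomials in $\vec Z$.
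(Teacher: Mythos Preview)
Your overall plan---reduce to the first-order Schwartz--Zippel lemma by replacing $\Lambda$-subterms with fresh variables and build $\ell$ inductively---is the same strategy the paper follows. But the particular inductive organisation (``outermost $\Lambda$-subterms'') has a genuine gap. The outermost $\Lambda$-subterms of different $P_k$ can have different nesting depths, and a shallow one can re-appear \emph{inside} the argument of a deeper one. Take $P_1=\Lambda(N)$ and $P_2=\Lambda(\Lambda(N))$: then $R_1=N$ and $R_2=\Lambda(N)$, so $r_1=\semantic{R_1}(n,\ell^\ast)=n$ while evaluating $R_2$ probes $\ell^\ast$ precisely at $n=r_1$. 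No growth condition on $\ell^\ast$ can make $r_1=n$ strictly exceed that probed argument. Hence when you set $\ell(r_1):=z_1$, the value $\semantic{R_2}(n,\ell)$ changes from $r_2=\ell^\ast(n)$ to $z_1$, and $\semantic{P_2}(n,\ell)=\ell(z_1)$ rather than the intended $\ell(r_2)=z_2$. Your final identity $\semantic{P_k}(n,\ell)=\hat P_k(n,z_1,\ldots,z_S)$ therefore fails. (A secondary issue: the ``cofinite $n$'' strengthening is argued via ``non-vanishing of a polynomial in $N$'', but for a non-polynomial $\ell^\ast$ the map $n\mapsto\semantic{R_s}(n,\ell^\ast)$ is not polynomial, so that justification also needs more care.)

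The paper circumvents this by stratifying not by outermost position but by \emph{height} in a normalized DAG of \emph{all} $\Lambda$-subterms (Definition~\ref{d:Normal}): it constructs $\ell$ on pairwise disjoint intervals $[L_d,L'_d]$ with $L'_d<L_{d+1}$, one per height $d$, via Lemma~\ref{l:Step}, so that every argument at which $\ell$ is probed when handling height~$d{+}1$ lies strictly above everything already fixed. Your argument can be repaired along the same lines---enlarge the list to include all $\Lambda$-subterms (not just outermost ones) and process them bottom-up by height rather than top-down---at which point it essentially becomes the paper's proof.
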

%

%%%%%%%%%%%%%%%%%%
\subsection{Arctic Polynomials}
\label{ss:Arctic}

Definition~\ref{d:Degree2} defines
the degree $\Deg(P)$ of a second-order polynomial $P=P(N,\Lambda)$ 
to be an ordinary polynomial, but one involving $\max()$:

\begin{defi}
\label{d:Arctic}
An \emph{arctic} (first-order) polynomial $\arctic{p}(\vec X)$ in variables $\vec X$ over an ordered semi-ring $\calR$
is a well-formed expression (=term) over $\vec X$, $+$, $\cdot$, and $\max()$.
\end{defi}
Recall Example~\ref{x:Arctic}.
Definition~\ref{d:Arctic} borrows from \emph{tropical} polynomials,
the latter being well-formed expressions over variables and $\min()$ and (usually only one of) $+$ or $\cdot$ \cite{Theobald}.
\cite[Lemma~1a+b]{Dagstuhl} should say `arctic' (instead of ordinary) polynomial.

\medskip
When recursively evaluating an arctic polynomial $\arctic{p}$ in one variable $D$,
%(\ie when considering it semantically as a function), 
each $\max()$ evaluates to (at least) one of its finitely many arguments;
and as $\semantic{D}$ varies, 
the role of the dominant argument can switch only finitely often,
as follows by structural induction:

\begin{lem}
\label{l:Arctic}
Fix a \emph{uni}variate arctic polynomial $\arctic{p}=\arctic{p}(D)$ over $\IN$.
Then, for all sufficiently large arguments $d\in\IN$,
$\arctic{p}(d)$ `boils down' to (\ie the function $\semantic{\arctic{p}}(d)$ 
coincides with the values $\semantic{p}(d)$ of) 
some \emph{ordinary} polynomial $p=p(d)$ over $\IN$.
\end{lem}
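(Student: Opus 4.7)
The plan is to proceed by structural induction on the arctic polynomial $\arctic{p}$ according to Definition~\ref{d:Arctic}, showing that in each case there exists a threshold $d_0\in\IN$ and an ordinary polynomial $p\in\IN[D]$ with $\semantic{\arctic{p}}(d)=\semantic{p}(d)$ for all $d\geq d_0$.

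First I would handle the base cases: the constant $1$ and the variable $D$ are already ordinary polynomials, so we may take $d_0=0$ and $p=\arctic{p}$. Next, the inductive cases for $+$ and $\cdot$ are immediate: if $\arctic{p}=\arctic{q}\odot\arctic{r}$ with $\odot\in\{+,\cdot\}$, the inductive hypothesis yields ordinary polynomials $q,r$ and thresholds $d_q,d_r$; for all $d\geq\max\{d_q,d_r\}$ we have $\semantic{\arctic{p}}(d)=\semantic{q}(d)\odot\semantic{r}(d)=\semantic{p}(d)$ with $p:=q\odot r$ again an ordinary polynomial over $\IN$.

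The main obstacle, unsurprisingly, is the case $\arctic{p}=\max\{\arctic{q},\arctic{r}\}$. The inductive hypothesis again supplies ordinary polynomials $q,r\in\IN[D]$ with $\semantic{\arctic{q}}(d)=\semantic{q}(d)$ and $\semantic{\arctic{r}}(d)=\semantic{r}(d)$ beyond some $d_1:=\max\{d_q,d_r\}$. The reduction needed here is the standard fact that any two ordinary univariate polynomials over $\IN$ are eventually comparable in a fixed direction: the difference $q-r\in\IZ[D]$ is either identically zero (in which case $p:=q=r$ works) or has a well-defined nonzero leading coefficient whose sign dictates the sign of $\semantic{q}(d)-\semantic{r}(d)$ for all $d$ exceeding some bound $d_2$ (e.g.\ depending on the ratio of the leading coefficient to the sum of absolute values of the remaining coefficients). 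Setting $d_0:=\max\{d_1,d_2\}$ and choosing $p$ to be whichever of $q,r$ dominates beyond $d_2$, we obtain $\semantic{\arctic{p}}(d)=\semantic{p}(d)$ for all $d\geq d_0$.

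No step requires delicate calculation beyond this eventual-sign argument for integer polynomials, and the induction closes cleanly because each case preserves membership in $\IN[D]$. The only subtlety worth flagging is that the threshold $d_0$ may grow with the depth of nested $\max$'s, but the statement only asserts existence of \emph{some} threshold, which the induction supplies directly.
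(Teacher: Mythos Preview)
Your proposal is correct and follows exactly the structural induction the paper indicates; the paper itself gives only the one-sentence hint preceding the lemma (``the role of the dominant argument can switch only finitely often, as follows by structural induction''), and your write-up simply spells out that induction with the expected eventual-comparison argument for the $\max$ case.
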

Again, recall Example~\ref{x:Arctic}.
Let us call the (according to Fact~\ref{f:SchwartzZippel} well-defined) ordinary polynomial $p$ 
the \emph{asymptotic polynomial} induced by arctic $\arctic{p}(D)$,
written $p(D)=\lim\arctic{p}(D)\in\IN[D]$.
Record that $\lim(\arctic{p}+\arctic{q})=(\lim\arctic{p})+(\lim\arctic{q})$ and
$\lim(\arctic{p}\cdot\arctic{q})=(\lim\arctic{p})\cdot(\lim\arctic{q})$,
but note that Lemma~\ref{l:Arctic} is restricted to the univariate case:
multivariate arctic terms like $\max(XY^2,X^2Y)$ induce
ordinary polynomials only up to constant factors.

%%%%%%%%%%%%%%%%%%%%%%%%%%%%%%%%%%%%%%%
\subsection{Second-Order Polynomial Degree}
\label{ss:Degree}

The total degree $\deg(p)\in\IN$ of an ordinary multivariate polynomial $p=p(\vec X)\neq0$
%(say, univariate natural-number) polynomial $p=p(N)\in\IN[N]$
is commonly defined based on its aforementioned syntactic normal form 
\[ \deg\big(X_1^{d_1}\cdots X_M^{d_M}\big) \;=\; d_1+\cdots+d_M , \quad \deg\big(\sum\nolimits_k p_k\big) \;=\; \max\nolimits_k \deg(p_k) \]
or, alternatively, by structural induction:
%$\deg(0)=-\infty$, 
$\deg(1)=0$,
\begin{equation}
\label{e:Degree}
\deg(N)=1, \qquad
\deg(p+q)=\max\{\deg(p),\deg(q)\},
\quad 
\deg(p\cdot q)=\deg(p)+\deg(q)
\enspace .
\end{equation}
Note that the former approach builds on monomial normal form
while the latter approach needs to separately establish
well-definition, namely invariance under syntactic equivalence
\begin{equation}
\label{e:Welldef}
p\SyntaxEQ q \quad\Rightarrow\quad \deg(p)\;=\;\deg(q) \enspace ,
\end{equation}
which can be proven by structural induction on defining rules of syntactic equivalence.
Either way, the \emph{Rule of Composition} then follows:
\begin{equation}
\label{e:Composition}
\deg(p\circ q) \;=\; \deg(p)\cdot\deg(q) \enspace .
\end{equation}

\begin{rem}
\label{r:Composition}
Strictly speaking, $p\circ q$ needs to be defined syntactically
(for instance by structural induction on $p$, essentially replacing
every variable of $p$ with $q$);
and said definition then is justified semantically 
by concluding $\semantic{p\circ q}=\semantic{p}\circ\semantic{q}$,
where (only) the right-hand side means composition of functions.
\end{rem}

\medskip
Following \cite{Dagstuhl},
consider extending Equation~\eqref{e:Degree} 
from ordinary to second-order polynomials as follows:

\begin{defi}
\label{d:Degree2}
Let the (second-order) degree of a second-order polynomial $P=P(N,\Lambda)$ over $\IN$
be given inductively by 
\begin{align}
\Deg(1) &:= 0 \enspace, \nonumber \\
\Deg(N) &:= 1 \enspace,  \nonumber \\
\Deg(P+Q) &:= \max\{\Deg(P),\Deg(Q)\} \enspace, \nonumber \\
\Deg(P\cdot Q) &:= \Deg(P)+\Deg(Q) \enspace, \nonumber \\
\label{e:Degree2}
\Deg\big(\Lambda(P)\big)&:=D\cdot\Deg(P)
\enspace .
\end{align}
\end{defi}
According to Lemma~\ref{l:Arctic},
the arctic degree $\Deg(P)$ of a second-order polynomial 
gives rise to an ordinary first-order polynomial $\lim\Deg(P)$---which
we shall call $P$'s \emph{asymptotic} degree:
again, recall Examples~\ref{x:Second} and \ref{x:Arctic}.

\begin{rem}
\label{r:Welldef}
Definition~\ref{d:Degree2} respects syntactic equivalence~\eqref{e:Syntax}:
arithmetical commutativity and associativity of $+$ and $\cdot$
translate to `arctic' commutativity and associativity
of $\max()$ and $+$, respectively;
multiplication by $1$ translates to addition by $0$;
and distributivity 
$P\cdot(Q+R)\SyntaxEQ P\cdot Q+P\cdot R$ translates to 
\[ 
%P\cdot(Q+R)\SyntaxEQ P\cdot Q+P\cdot R \quad\text{translates to}\quad
\Deg(P)+\max\big\{\Deg(Q),\Deg(R)\big\} \;=\;
\max\big\{\Deg(P)+\Deg(Q)\:,\:\Deg(P)+\Deg(R)\big\} \enspace . \]
\end{rem}
Second-order polynomials naturally compose in \emph{two} different ways:
%\begin{equation}
%\label{e:Composition2}
%P\big(Q(N,\Lambda),\Lambda\big) \;=\; \big(P\star Q\big)(N,\Lambda),
%\qquad
%P\big(N,Q(\cdot,\Lambda)\big) \;=\; \big(P\circ Q\big)(N,\Lambda),
%\end{equation}
%whose syntactic definitions we give by structural induction.

\begin{defi}
\label{d:Composition2}
Let $P = P(N,\Lambda)$ and $Q = Q(N,\Lambda)$ be second-order polynomials.
\begin{enumerate}
\item[a)] 
$P\big(Q(N,\Lambda),\Lambda\big) \;=\; \big(P\star Q\big)(N,\Lambda)$ is
essentially the replacement in $P$ of every first-order variable $N$ by $Q$,
defined inductively by
\begin{align*}
1 \star Q &:= 1, \\
N \star Q &:= Q , \\
(P_1 + P_2) \star Q &:= (P_1 \star Q) + (P_2 \star Q), \\
(P_1 \cdot P_2) \star Q &:= (P_1 \star Q) \cdot (P_2 \star Q), \\
\Lambda(P) \star Q &:= \Lambda(P \star Q).
\end{align*}
\item[b)]
$P\big(N,Q(\cdot,\Lambda)\big) \;=\; \big(P\circ Q\big)(N,\Lambda)$ is
essentially the replacement in $P$ of every second-order variable $\Lambda$ by $Q$, 
defined inductive by
\begin{align*}
1 \circ Q &:= 1 \enspace , \\
N \circ Q &:= N \enspace ,  \\
(P_1 + P_2) \circ Q &:= (P_1 \circ Q) + (P_2 \circ Q) \enspace , \\
(P_1 \cdot P_2) \circ Q &:= (P_1 \circ Q) \cdot (P_2 \circ Q) \enspace , \\
\Lambda(P) \circ Q &:= Q \star (P \circ Q) \enspace .
\end{align*}
\end{enumerate}
\end{defi}
The degree transforms one kind of composition into multiplication,
like in the classical case; and the other kind of second-order polynomial
composition transforms as ordinary composition of first-order (arctic) polynomials:

\begin{prop}
\label{p:Degree2}
Let $P=P(N,\Lambda)$ and $Q=Q(N,\Lambda)$ be second-order polynomials over $\IN$.
\begin{enumerate}
\item[a)]
$P(Q,\Lambda)$ is again a second-order polynomial in $(N,\Lambda)$ over $\IN$, 
and it holds \linebreak 
$\semantic{P\big(Q(N,\Lambda),\Lambda\big)}(n,\ell)=\semantic{P}\big(\semantic{Q}(n,\ell),\ell\big)$
for all $n\in\IN$ and all $\ell\in\MONOTONE$. Furthermore
\[ \Deg\Big(P\big(Q(N,\Lambda),\Lambda\big)\Big)(D) \;=\; \Deg\big(P\big)(D) \;\cdot\;\Deg\big(Q\big)(D) \enspace . \]
\item[b)]
$P\big(N,Q(\cdot,\Lambda)\big)$ is again a second-order polynomial in $(N,\Lambda)$ over $\IN$,
and it holds \linebreak
$\semantic{P\big(N,Q(\cdot,\Lambda)\big)}(n,\ell)=\semantic{P}\big(n,m\mapsto\semantic{Q}(m,\ell)\big)$
for $n\in\IN$ and $\ell\in\MONOTONE$. Furthermore
\[ \Deg\Big(P\big(N,Q(\cdot,\Lambda)\big)\Big)(D) \;=\; \Deg\Big(P\Big)\big(\Deg(Q)(D)\big) \enspace . \]
\item[c)]
$\deg\big(\lim\Deg(P)\big)\in\IN$ is well-defined and coincides with the nesting depth of $\Lambda$ in $P$.
\end{enumerate}
\end{prop}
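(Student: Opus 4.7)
The plan is to prove parts~(a)--(c) by structural induction on $P$, using the recursive clauses of Definitions~\ref{d:Composition2} and~\ref{d:Degree2}; in each part, syntactic closure is immediate by inspection of the clauses, the semantic equation is routine induction through the clauses, and only the degree formula carries interesting algebra. For part~(a), the semantic identity $\semantic{P\star Q}(n,\ell) = \semantic{P}\big(\semantic{Q}(n,\ell),\ell\big)$ reduces inductively to the $\Lambda(P')$ case, where both sides evaluate to $\ell$ applied to the inductively known quantity. The degree identity $\Deg(P\star Q) = \Deg(P)\cdot\Deg(Q)$ (as arctic polynomials in $D$) proceeds by induction using the algebraic identities $\max\{a,b\}\cdot c = \max\{a\cdot c,\;b\cdot c\}$ (valid in $\IN$ since all quantities are nonnegative) and ordinary distributivity; the $\Lambda$-clause is immediate from $\Deg(\Lambda(P'\star Q)) = D\cdot\Deg(P'\star Q)$ combined with the induction hypothesis.

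Part~(b) runs similarly, the one delicate point being the $\Lambda(P')$ case for both semantics and degree. Semantically, part~(a) gives $\semantic{Q\star(P'\circ Q)}(n,\ell) = \semantic{Q}\big(\semantic{P'\circ Q}(n,\ell),\ell\big)$, and then the induction hypothesis on $P'\circ Q$ yields exactly $\semantic{\Lambda(P')}\big(n,\; m\mapsto\semantic{Q}(m,\ell)\big)$. For the degree, Definition~\ref{d:Composition2}(b) together with part~(a) yields
\[ \Deg\big(\Lambda(P')\circ Q\big)(D) \;=\; \Deg\big(Q\star(P'\circ Q)\big)(D) \;=\; \Deg(Q)(D)\cdot \Deg(P')\big(\Deg(Q)(D)\big), \]
which is precisely $\Deg(\Lambda(P'))$ evaluated at $\Deg(Q)(D)$.

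For part~(c), let $\nu(P)$ denote the nesting depth of $\Lambda$ in $P$, inductively defined by $\nu(1) = \nu(N) = 0$, $\nu(P_1+P_2) = \nu(P_1\cdot P_2) = \max\{\nu(P_1),\nu(P_2)\}$, and $\nu(\Lambda(P')) = 1+\nu(P')$. I would prove $\deg\big(\lim\Deg(P)\big) = \nu(P)$ by structural induction, invoking the remark after Lemma~\ref{l:Arctic} for the facts that $\lim$ distributes over $+$ and over multiplication by $D$; that $\deg(p+q) = \max\{\deg p,\deg q\}$ for ordinary polynomials with nonnegative coefficients since no cancellation can occur; and that $\deg\big(\lim\max\{\arctic{p},\arctic{q}\}\big) = \max\{\deg\lim\arctic{p},\;\deg\lim\arctic{q}\}$ because the arctic $\max$ asymptotically selects the summand of higher ordinary degree. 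The $\Lambda$-case then reads $\deg\big(D\cdot\lim\Deg(P')\big) = 1+\nu(P') = \nu(\Lambda(P'))$, as required.

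The main obstacle I anticipate is the interpretation of equality in parts~(a) and~(b): the two sides agree as arctic polynomials only modulo the arctic analogues of~\eqref{e:Syntax}, so the induction must be carried out up to such equivalence, exactly as sanctioned by Remark~\ref{r:Welldef}. Semantically as functions $\IN\to\IN$, however, the equalities hold on the nose, so the evaluated-at-$D$ form stated in the proposition causes no ambiguity in any case.
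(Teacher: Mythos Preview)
Your proposal is correct. Part~(a) proceeds by the same structural induction on $P$ as the paper, and part~(c) is not proved in the paper at all, so your sketch there stands on its own.

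Part~(b), however, takes a genuinely different route. The paper argues in three phases: first the special case $Q=N^d$, then $Q=q(N)$ for a first-order $q$, and finally a general second-order $Q$ via a \emph{double} structural induction on both $P$ and $Q$, tracking how each occurrence of $D$ in $\Deg(P)$ gets replaced by $\Deg(Q)(D)$. You instead run a \emph{single} structural induction on $P$ and, in the $\Lambda(P')$ case, invoke the already-established part~(a) on $Q\star(P'\circ Q)$ to obtain $\Deg(Q)(D)\cdot\Deg(P'\circ Q)(D)$ directly, which together with the induction hypothesis on $P'$ gives exactly $\Deg(\Lambda(P'))\big(\Deg(Q)(D)\big)$. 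Your argument is shorter and cleaner, and it makes explicit the dependence on part~(a) that the paper's double induction handles implicitly; the paper's approach, on the other hand, makes the substitution-of-$D$ picture more visible as an informal slogan. Both buy the same conclusion, but yours is the more economical packaging.
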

Recall that the \emph{asymptotic degree}
$\lim\Deg(P)$ denotes the ordinary polynomial 
that semantically agrees with the arctic polynomial $\Deg(P)$ 
for all sufficiently large arguments $D\in\IN$.
According to Proposition~\ref{p:Degree2}c), our second-order degree 
refines the (nesting) \emph{depth} of second-order polynomials
considered previously {\rm\cite{Kapron}}. 
Indeed, second-order polynomial asymptotic growth can now be stated 
with decreasing degrees (pun!) of detail
and increasing conciseness, as illustrated in Figure~\ref{f:Hierarchy}.

\begin{figure}[htb]\noindent%
\begin{tabular}{l|c}
\hspace*{5em}\large Statement about Growth & \large Example \\[0.5ex] 
\hline\rule{0pt}{10pt}%
As a given second-order polynomial $P=P(N,\Lambda)$ %over $\IN$ 
& (complicated) $P$ from Example~\ref{x:Second} 
\\[1ex]
\parbox[t]{23.5em}{%
As (some \emph{un}specified second-order polynomial) \par having
a \emph{given} arctic first-order polynomial as degree\strut}
& (simpler) $\arctic{p}(D)$ from %\eqref{e:Arctic} in 
Example~\ref{x:Arctic} 
\\[4ex]
\parbox[t]{23.5em}{%
As (some \emph{un}specified second-order polynomial) having \par
a given first-order polynomial as \emph{asymptotic} degree\strut} 
& ``$15D^3+2D+4$'' (Example~\ref{x:Arctic})
\\[4ex]
\parbox[t]{23.5em}{%
As (some \emph{un}specified second-order polynomial) having \par
a given nesting depth according to Proposition~\ref{p:Degree2}c)\strut}
& ``$3\!\in\!\IN$''~ in Examples~\ref{x:Second} \& \ref{x:Arctic}
\\[4ex]
As some \emph{un}specified second order polynomial
& \cite[Definition~3.2]{Kawamura}
\end{tabular}
\caption{\label{f:Hierarchy}Stating Second-Order Polynomial Growth 
in Decreasing Levels of Detail.}
\end{figure}

\begin{proof}[Proof of Proposition~\ref{p:Degree2}]
\begin{enumerate}
\item[a)]
by structural induction:
Induction start case $P=1$ results in left and right-hand side both equal 0,
case $P=N$ results in left and right-hand side both equal $\Deg(Q)$.
Regarding the induction step, in case $P=P_1+P_2$ 
exploit $\Deg(P)=\max\{\Deg(P_1),\Deg(P_2)\}$ 
according to Definition~\ref{d:Degree2}
and the induction hypothesis for $P_1,P_2$;
in case $P=P_1\cdot P_2$ 
exploit $\Deg(P)=\Deg(P_1)+\Deg(P_2)$ and the induction hypothesis for $P_1,P_2$.
In case $P=\Lambda(P')$ finally,
%$\Deg\Big(P\big(Q(N,\Lambda),\Lambda\big)\Big)(D)=$
\begin{multline*}
\Deg\Big(P\big(Q(N,\Lambda),\Lambda\big)\Big)(D) 
\;\overset{\eqref{e:Degree2}}{=}\;
D\cdot\Deg\Big(P'\big(Q(N,\Lambda),\Lambda\big)\Big)(D) 
%\;\overset{\text{IH}}{=}\; 
\\ \;\overset{\text{IH}}{=}\;
D\cdot\Deg\big(P'\big)(D)\cdot\Deg\big(Q\big)(D)
\;\overset{\eqref{e:Degree2}}{=}\; %\\ &\overset{\eqref{e:Degree2}}{=}&
\Deg\big(P\big)(D)\cdot\Deg\big(Q\big)(D)
\enspace .
\end{multline*}
\item[b)]
To warm up, consider the case $Q(N,\Lambda)=N^d$ for some fixed $d\in\IN$;
and show $P(N,m\mapsto m^d)$ to be a first-order polynomial over $\IN$.
as well as $\deg\big(P(N,m\mapsto m^d)\big)=\Deg\big(P\big)(d)$.
Indeed, that induction starts $P=1$ and $P=N$ as well as the induction steps
$P=P_1+P_2$ and $P=P_1\cdot P_2$ are immediate.
In the remaining case $P=\Lambda(P')=P'^d$ boils down to a first-order polynomial by induction hypothesis,
and its first-order degree is $d\cdot\deg\big(P'(N,m\mapsto m^d)\big)$
instead of $D\cdot\Deg\big(P'\big)(D)$ according to Equation~\eqref{e:Degree2}:
In other words, variable $D$ gets substituted by natural number $d$.

Next consider the case $Q(N,\Lambda)=q(N)$ for arbitrary $q\in\IN[N]$.
Here the above argument generalizes 
%to double/nested structural induction in order 
to verify that each occurrence of variable $D$ in $\Deg(P)$ according to Equation~\eqref{e:Degree2} 
gets replaced by the natural number
$\max\{\deg(q_1),\deg(q_2)\}=\deg(q)$ in case $q=q_1+q_2$;
and by $\deg(q_1)+\deg(q_2)=\deg(q)$ in case $q=q_1\cdot q_2$.

Finally in the case of a general second-order polynomial $Q(N,\Lambda)$,
we show by double/nested structural induction
that each occurrence of variable $D$ in $\Deg(P)$ according to Equation~\eqref{e:Degree2} 
gets replaced by $\Deg\big(Q\big)(D)$ in $\Deg\Big(P\big(N,m\mapsto Q(m,\Lambda)\big)\Big)(D)$.
The induction start cases $Q=1$ and $Q=N=N^1$ proceed literally as in the first case;
and so do the induction step cases $Q=Q_1+Q_2$ and $Q=Q_1\cdot Q_2$.
Regarding the remaining case $Q=\Lambda(Q')$ and $P=\Lambda(P')$ of the double induction step,
$\Deg\big(P\big)(D)=D\cdot\Deg\big(P'\big)(D)$ according to Equation~\eqref{e:Degree2}
becomes $\Deg\Big(P\big(N,Q\big)\Big)(D)=\big(D\cdot\Deg(Q')\big)\cdot\Deg\big(P'\big)(D)$ 
by induction hypothesis, which amounts to $\Deg(Q)\cdot\Deg\big(P'\big)(D)$ as claimed.
%\item[c)] YYY
\qedhere\end{enumerate}\end{proof}
%

%%%%%%%%%%%%%%%%%%%%%%%%%%%%%%%%%%%%%%%
\subsection{Applications}
\label{ss:Applications}

The asymptotic behavior of an ordinary univariate polynomial $p(N)$ 
is governed by its (term of largest) degree.
For a second-order polynomial $P(N,\Lambda)$,
its second-order degree similarly captures its asymptotic growth
by additionally taking into account the dependence on $\Lambda$:
In case $\semantic{\Lambda}:\IN\nearrow\IN$ is an ordinary polynomial $p\in\IN[N]$,
then so is $P(N,p)$ and has ordinary degree 
$\deg\big(P(N,\ell)\big)=\Deg\big(P\big)(\deg p)$
according to Proposition~\ref{p:Degree2}b);
and when $\ell$ grows simply exponentially, then 
$\semantic{P}(\semantic{N},\ell)$ grows
as an exponential tower of constant height $\deg\big(\lim\Deg(P)\big)$
according to Proposition~\ref{p:Degree2}c).

\begin{rem}
\label{r:Linear}
Related work \cite{Linear} has investigated \emph{linear} second-order polynomials:
defined by omitting/prohibiting multiplication from Definition~\ref{d:Second}
(but still allowing for addition $+$ and nesting $\Lambda$).

These can now be characterized as having degree
an arctic polynomial with\emph{out} addition $+$
(but still with multiplication $\cdot$ and $\max$).
Note that $1$ is the only constant available.
\end{rem}
Classical algorithm design and analysis aims for
running times of low(est) polynomial degree:
recall for example \emph{Matrix Multiplication} or \emph{3SUM}. This work promotes similarly refined analyses,
and the design of improved algorithms, for higher-type problems
such as operators in analysis \cite{ODEs2}.

Classically, 
if Turing machine $\calM$ computes function $f$ in time $p(n)$ 
and machine $\calN$ computes $g$ in time $q(m)$,
then sequentially `piping' the output of $\calM$ 
as input (of length $m\leq p(n)$) to $\calN$,
results in computing $g\circ f$ in
total running time $\calO\Big(p(n)+q\big(p(n)\big)\Big)$.
This fact works as a lemma assisting in the analysis of modular algorithm design.

We can similarly analyze the running time of
composition of oracle machines.
Note, however, that second-order functions compose
in two distinct ways. Consider
$F,G:\TWO^*\times(\TWO^*)^{\TWO^*}\to\TWO^*$.
One may compute either
\begin{align*}
G(F(\varphi,\vec x), \varphi) 
\enspace \text{ or } \enspace
G(\vec x, F(\cdot, \varphi)) \enspace ,
\end{align*}
where 
$\varphi:\TWO^*\to\TWO^*$ and $\vec x \in \TWO^*$.
We investigate each case separately
in Lemma~\ref{l:Composition} 

Recall \cite[Definition~3.2]{Kawamura} that an oracle Turing machine $\calM$
is said to run in time $P(N,\Lambda)$ if,
on any string input $\vec x\in\TWO^n$
and for any string function oracle $\varphi:\TWO^*\to\TWO^*$,
$\calM^\varphi(\vec x)$ makes at most $P(|\vec x|,|\varphi|)$ steps, where
\[ |\varphi|:\IN\nearrow\IN, \quad m\;\mapsto\; \max\{|\varphi(\vec x)|:|\vec x|\leq m\}  \enspace . \]
%Moreover concatenation $\calN^\calM$ of oracle Turing machines on $(\vec x,\varphi)$ means
%to execute $\calN^\psi(\vec x)$, where $\psi:=\calM^\varphi$.

\begin{lem}
\label{l:Composition}
Let $\calM$ and $\calN$ be oracle Turing machines,
each with second-order running time bounds
$P(M,\Delta)$ and $Q(N,\Lambda)$, respectively.
\begin{enumerate}
\item[a)]
Concatenation 
$(\vec x, \varphi)\mapsto\calN\big(\calM(\varphi,\vec x\big),\varphi)$
runs in time $\calO\big(P  + Q\star P\big)$ 
of degree \\
$\max\big(\Deg(P), \Deg(P)\cdot \Deg(Q)\big)$.
\item[b)]
Concatenation
$(\vec v, \varphi)\mapsto\calN\big(\vec x,\calM(\cdot, \varphi)\big)$
runs in time
$\calO\Big(\big(Q\circ P\big)\:\cdot\: \big(P\star (Q\circ P)\big)\Big)$
of degree
$\Deg(Q)\circ\Deg(P)\:+\:\Deg(P)\cdot\big(\Deg(Q)\circ\Deg(P)\big)$.
\end{enumerate}
\end{lem}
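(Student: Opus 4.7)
The plan is to unfold the definition of oracle running time for each composition scheme, bound the total cost by an explicit second-order polynomial built from $P$ and $Q$ via the operations $\star$ and $\circ$ of Definition~\ref{d:Composition2}, and then read off the degree by applying Proposition~\ref{p:Degree2}. Throughout I shall write $n:=|\vec x|$ and $\mu:=|\varphi|\in\MONOTONE$, and freely use the (easily verified, by structural induction on Definition~\ref{d:Second}) fact that both $\semantic{P}$ and $\semantic{Q}$ are monotone in their first (numerical) argument and monotone with respect to the pointwise order in their second (functional) argument.

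For part~a), $\calM^{\varphi}(\vec x)$ halts in at most $P(n,\mu)$ steps and therefore produces an output $\vec y$ of length $|\vec y|\leq P(n,\mu)$. The subsequent computation $\calN^{\varphi}(\vec y)$ then uses at most $Q(|\vec y|,\mu)\leq Q\bigl(P(n,\mu),\mu\bigr)$ further steps; by Definition~\ref{d:Composition2}a) the latter equals $(Q\star P)(n,\mu)$. The total runtime is therefore bounded by $P(n,\mu)+(Q\star P)(n,\mu)$ (plus an $\calO(1)$ overhead for switching tapes). Its degree follows from Definition~\ref{d:Degree2} together with Proposition~\ref{p:Degree2}a), giving $\max\bigl(\Deg(P),\Deg(P)\cdot\Deg(Q)\bigr)$.

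For part~b), the subtle point is that $\calN$'s oracle $\psi:=\calM(\cdot,\varphi)$ is \emph{not} supplied for free: every single oracle query of $\calN$ must be answered by an entire simulated run of $\calM$. First I would pointwise majorize the length function $|\psi|\in\MONOTONE$ by noting that $|\psi|(m)=\max_{|\vec z|\leq m}|\calM(\varphi,\vec z)|\leq P(m,\mu)$. Monotonicity of $Q$ in its functional argument then yields that $\calN$ itself performs at most $Q\bigl(n,\,m\mapsto P(m,\mu)\bigr)=(Q\circ P)(n,\mu)$ steps, by Definition~\ref{d:Composition2}b). This simultaneously caps both the number and the length of the oracle queries issued by $\calN$; consequently each query is simulated in time at most $P\bigl((Q\circ P)(n,\mu),\mu\bigr)=\bigl(P\star(Q\circ P)\bigr)(n,\mu)$. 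Summing the native clock of $\calN$ with the total simulation cost gives the bound $(Q\circ P)+(Q\circ P)\cdot\bigl(P\star(Q\circ P)\bigr)$, which, since $P\star(Q\circ P)\geq1$, is $\calO\bigl((Q\circ P)\cdot(P\star(Q\circ P))\bigr)$. Iterating Proposition~\ref{p:Degree2}a)+b) then computes the degree as $\bigl(\Deg(Q)\circ\Deg(P)\bigr)+\Deg(P)\cdot\bigl(\Deg(Q)\circ\Deg(P)\bigr)$, matching the claim.

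The main obstacle is the careful accounting in part~b): one must distinguish the "native" step-count of $\calN$ (whose bound requires the pointwise majorization of $|\psi|$ by $P(\cdot,\mu)$ and hence the operator $\circ$) from the oracle-simulation overhead (which involves substituting into $P$ and hence the operator $\star$), and then recognize the resulting expression as built from precisely the two second-order composition operations of Definition~\ref{d:Composition2}, so that Proposition~\ref{p:Degree2} applies verbatim to yield the stated degree.
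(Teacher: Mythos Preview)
Your proposal is correct and follows essentially the same approach as the paper: bound the output length of $\calM$ by its time bound, majorize $|\psi|$ pointwise by $P(\cdot,\mu)$, and then account separately for $\calN$'s native steps and the per-query simulation cost. If anything, your write-up is more explicit than the paper's (which leaves the degree computations and the final $\calO$-simplification in part~b) to the reader), so nothing needs to be added.
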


\begin{proof}
\begin{enumerate}
\item[a)]
By definition of $P$, 
running $\calM(\varphi, \vec x)$ makes at most 
$P(|\varphi|, |\vec x|)$ steps, whose output being of length
at most $P(|\varphi|, |\vec x|)$. 
By definition of $Q$, feeding the output to
$\calN(\cdot, \varphi)$
makes at most
$Q(P(|\varphi|, |\vec x|), |\varphi|)$ steps.
\item[b)]
By definition, $\calN^\psi(\vec x)$ makes at most $Q(|\vec x|,|\psi|)$ steps.
and in particular makes at most that many queries $\vec y$ to $\psi=\calM^\varphi$,
each of length $|\vec y|=m\leq Q(|\vec x|,|\psi|)$.
Similarly, $\psi=\calM^\varphi$ answering any such query takes time at most
$P(|\vec y|,|\varphi|)$ and in particular returns $\vec z=\psi(\vec y)$
of length $|\vec z|\leq P(|\vec y|,|\varphi|)$;
hence $|\psi|(m)\leq P(m,|\varphi|)$.
\qedhere\end{enumerate}
\end{proof}

%%%%%%%%%%%%%%%%%%%%%%%%%%%%%%%%%%%%%%%
\subsection{Normal Form for Second-Order Polynomials}
\label{ss:Normal}

This subsection establishes a normal form for second-order polynomials $P=P(N,\Lambda)$ over $\IN$:
For any (finite collection of) such $P$, we define a labelled directed acyclic graph (DAG)
that allows to recover (the collection of said) $P$ up to syntactic equivalence ``$\SyntaxEQ$''
according to Equation~\eqref{e:Syntax}.
And Subsection~\ref{ss:Proof} shows that different such DAGs correspond to semantically 
different (collections of) such $P$, hence justifying the name ``normal form''.

\medskip
To provide some intuition, consider a second-order polynomial $P=P(N,\Lambda)$ over $\IN$.
`Unwinding' Definition~\ref{d:Second} yields an (at most binary) expression tree $\calT$ representing $P$,
with leaves $1$ and $N$
as well as internal nodes/root
labelled $+$ and $\cdot$ (binary) and $\Lambda$ (unary).
Note that leaves $1$ may occur repeatedly in $\calT$, and same for $N$.
Moreover syntactically different but equivalent 
second-order polynomials/sub-expressions $P\SyntaxEQ Q$ 
may yield different trees $\calT$/nodes.

Towards the announced normal form,
now consider a `compressed' variant of the expression tree $\calT$,
a directed acyclic graph $\calD=\calD(P)$ 
that still represents $P$ (up to syntactic equivalence), 
but now only each invocation of $\Lambda(\cdots)$ in $P$
gets represented by an (internal) node in $\calD$,
labelled with the arguments to $\Lambda()$:

\begin{defi}
\label{d:Normal}
Since in our setting everything is nondecreasing,
let us say that a multivariate (first-order) polynomial $p=p(\vec X)$ over $\IN$
\emph{depends} on variable $X_m$ if it
satisfies $p(\vec x)\geq x_m$ for 
every assignment $\vec x$.
\begin{enumerate}
\item[a)]
Consider a directed acyclic graph $\calD$
with exactly two leaves, labelled with integer $1$ and with variable $N$ respectively.
Suppose that any non-leaf node is ancestor to at least one of the leaves.
Moreover the internal nodes $u$ of $\calD$ (except for the roots)
are labelled with symbolic expressions $\Lambda\big(q_u(\vec Y)\big)$
for some multivariate first-order polynomials $q_u(\vec Y)$ over $\IN$
whose variables $\vec Y$ (that $q_u$ really depends on) correspond to the immediate children of $u$.
The roots $r$ of $\calD$ are similarly labelled with symbolic expressions $q_r(\vec Y)$
omitting $\Lambda$.

\medskip
The \emph{height} of a node is its shortest distance to any of the leaves;
the height of $\calD$ is the maximum height of its nodes.
Call $\calD$ \emph{normalized} if any two parents $u,v$ sharing the same collection of immediate children
are labelled with syntactically \emph{non-}equivalent multivariate first-order polynomials $q_u(\vec Y)\neq q_v(\vec Y)$.
\item[b)]
To each node $v$ of a graph $\calD$ according to (a) associate a 
second-order polynomial $P_{\calD,v}=P_{\calD,v}(N,\Lambda)$ over $\IN$
by structural induction:
Leaves $N$ and $1$ are associated to second-order polynomials $N$ and $1$, respectively.
Internal node $u$ labelled $\Lambda\big(q_u(\vec Y)\big)$ is associated to
second-order polynomial $\Lambda\big(q_u(\vec Y)\big)$,
where $Y_k$ denotes the second-order polynomial 
already assigned to the $k$-th immediate child 
by induction hypothesis.
Root $r$ labelled $q_r(\vec Y)$ is similarly associated to
second-order polynomial $q_r(\vec Y)$.

\medskip
To a subset $V$ of $\calD$'s nodes associate the finite set
$\calP(\calD,V)=\big\{ P_{\calD,v}:v\in V\big\}$
of second-order polynomials over $\IN$.
Note that $P_{\calD,u}$ is a sub-expression of $P_{\calD,v}$
whenever $v$ is an ancestor of $u$.
\item[c)]
Conversely, to a fixed second-order polynomial $P=P(N,\Lambda)$ over $\IN$,
consider the directed acyclic graph $\calD=\calD_P$ 
with leaves $N$ and $1$ constructed inductively as follows: 

\medskip
Regarding the induction start, for each sub-expression $\Lambda(\cdots)$ of $P$
whose argument does \emph{not} involve $\Lambda$ (and hence must be a univariate first-order polynomial $q=q(N)$ over $\IN$), 
$\calD$ contains an internal node $u$ that is labelled $q(N)$ and connected/parent to leaves $1$ and $N$.
(In case $q(N)$ is a constant independent of $N$, omit $u$'s directed edge to $N$.)

Proceeding to the induction step with respect to the syntactic nesting depth of $\Lambda$,
now consider a sub-expression $\Lambda(\cdots)$ of $P$
some whose argument in turn involves $\Lambda$;
more precisely whose argument is a multivariate first-order polynomial $q(N,\vec Y)$
over $\IN$ in variable $N$ and in $K$ expressions $Y_k=\Lambda(\cdots)$ that $q$ really depends on.
By induction hypothesis, these $Y_k$ have already given rise to internal nodes
$v_k$ of $\calD$ labelled $Y_k$.
Then $\Lambda\big(q(N,\vec Y)\big)$ gives rise 
to a new internal node of $\calD$ that is
labelled $q(N,\vec Y)$ and connected/direct parent to those $v_k$ on lower levels
as well as to leaves $1$ and $N$ (the latter again with possible exception as above).

\medskip
Note that variable $N$ is in fact not (and in the sequel will not be) treated
differently from the other first-order variables $\vec Y$.
\item[d)]
For $\calP,\calP'$ two finite sets of second-order polynomials,
write $\calP\SyntaxLEQ\calP'$ if each $P\in\calP$ is syntactically equivalent to some $P'\in\calP'$
according to Equation~\eqref{e:Syntax}:
$\forall P\in\calP \;\exists P'\in\calP': \;\; P\SyntaxEQ P'$.
Let $\calP\SyntaxEQ\calP'$ abbreviate $\calP\SyntaxLEQ\calP'\;\wedge\;\calP'\SyntaxLEQ\calP$.
\end{enumerate}
\end{defi}
\begin{exa}
\label{x:Normal}
Labelled DAG to the second-order polynomial from Example~\ref{x:Second}.
\nopagebreak\\[0.5ex]\nopagebreak
\includegraphics[width=0.8\textwidth]{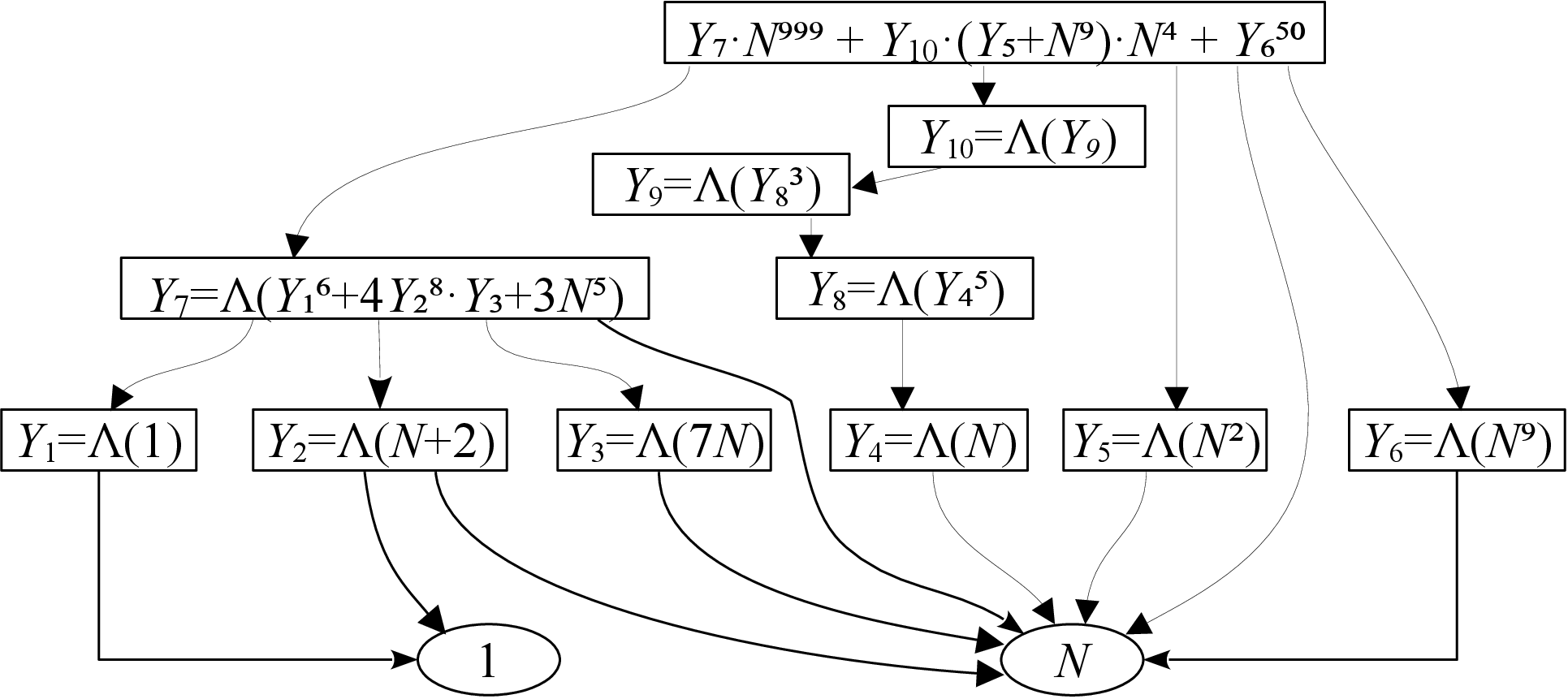}
\end{exa}
Intuitively speaking, 
normalizing a DAG according to Definition~\ref{d:Normal}a)
means merging (nodes corresponding to)
syntactically redundant sub-expressions.
Doing so yields the desired normal form:

\begin{thm}
\label{t:Normal}
To any finite set $\calP$ of second-order polynomials $P=P(N,\Lambda)$ over $\IN$,
consider the labelled DAG $\calD=\calD(\calP)$ 
obtained from first merging and then normalizing (Definition~\ref{d:Normal}a)
the DAGs $\calD_P$, $P\in\calP$, according to Definition~\ref{d:Normal}c).

Then it holds $\calP\SyntaxLEQ\calP(\calD)$.
Moreover $\calP\SyntaxEQ\calP'$ implies that
$\calD(\calP)$ and $\calD(\calP')$ are isomorphic as labelled DAGs. 
Conversely, any two syntactically non-equivalent $P,Q\in\calP$
correspond to distinct nodes (although not necessarily roots) $u,v$ 
of $\calD$, even after normalizing.

Finally, to every normalized DAG $\calD$ according to Definition~\ref{d:Normal}a),
there exists an assignment $(n,\ell)\in\IN\times\MONOTONE$ 
that makes all second-order polynomial sub-expressions in different nodes evaluate differently:
\[ u\neq v \;\Rightarrow\; P_{\calD,u}(n,\ell)\neq P_{\calD,v}(n,\ell) \enspace . \]
\end{thm}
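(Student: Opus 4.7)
The plan is to dispatch the theorem's four claims---$\calP\SyntaxLEQ\calP(\calD)$, isomorphism under $\SyntaxEQ$, distinctness of nodes for non-equivalent polynomials, and semantic separability---in order, treating the first three combinatorially and devoting the bulk of the argument to the last. For $\calP\SyntaxLEQ\calP(\calD)$, I would proceed by structural induction on $P\in\calP$ to verify that the construction of $\calD_P$ in Definition~\ref{d:Normal}c) associates to $P$ a root whose induced polynomial $P_{\calD_P,r}$ coincides with $P$ up to syntactic equivalence; merging DAGs and subsequently normalizing only identifies nodes that share both children and a syntactically equivalent first-order label $q_u(\vec Y)$, which preserves the induced second-order polynomial modulo $\SyntaxEQ$. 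The isomorphism claim then follows because normalization furnishes a canonical representative in each $\SyntaxEQ$-class: the normalized DAG depends only on the set of sub-expressions modulo $\SyntaxEQ$, so $\calP\SyntaxEQ\calP'$ entails labelled-DAG-isomorphic normal forms. Distinctness of nodes for non-equivalent polynomials is then the contrapositive of the statement that $P_{\calD,u}\SyntaxEQ P_{\calD,v}$ whenever $u=v$, which one checks inductively on height by observing that no merge identifies nodes whose induced second-order polynomials are inequivalent.

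The heart of the theorem is the semantic separation statement, which I would prove by induction on the height of $\calD$, maintaining the invariant that all nodes of height at most $h$ evaluate pairwise distinctly under the partial assignment constructed so far. In the induction step I consider the nodes $v_1,\ldots,v_K$ of height exactly $h+1$ with labels $\Lambda\big(q_{v_i}(\vec Y^{(i)})\big)$; each evaluates to $\ell\big(q_{v_i}(\vec y^{(i)})\big)$, where $\vec y^{(i)}$ are the already-assigned distinct values at the children. Because $\ell$ will ultimately be chosen injective on the finite set of arguments arising from $\calD$, it suffices to guarantee $q_{v_i}(\vec y^{(i)})\neq q_{v_j}(\vec y^{(j)})$ for all $i\neq j$. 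Viewed as a polynomial in the formal variables indexing all lower-level node labels, the difference $q_{v_i}(\vec Y^{(i)})-q_{v_j}(\vec Y^{(j)})$ is nonzero: if $v_i$ and $v_j$ share the same children then normalization of $\calD$ forces $q_{v_i}\not\SyntaxEQ q_{v_j}$, and otherwise the ``really depends'' clause of Definition~\ref{d:Normal}a) exhibits a variable appearing on exactly one side that witnesses non-vanishing. A multi-polynomial Schwartz/Zippel application (Fact~\ref{f:SchwartzZippel}) to the product of all such height-$(h+1)$ differences, combined with the lower-level distinctness polynomials from the induction hypothesis, then produces values making every required inequality hold simultaneously.

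The main obstacle I anticipate is reconciling the level-by-level induction with the fact that $\ell$ is a single global nondecreasing function rather than a freshly chosen family of values: the ``lower-level node values'' that serve as Schwartz/Zippel parameters are themselves $\ell$-evaluations, constrained by monotonicity and by choices made at earlier levels. I would handle this by regarding the successive $\ell$-values at arguments produced by $\calD$ as formal indeterminates (along with $n$), verifying that each difference polynomial at every height remains nonzero in these indeterminates---the dependence/normalization argument above goes through unchanged---and then applying Fact~\ref{f:SchwartzZippel} once, to the product over all heights, to obtain a single joint assignment of $n$ and of the required $\ell$-values. Finally I would extend $\ell$ to an element of $\MONOTONE$ by mapping the new arguments, sorted by value, to strictly increasing fresh naturals and interpolating monotonically, yielding an assignment $(n,\ell)$ that separates all sub-expressions and, by restriction to the roots of $\calD(\calP)$, also establishes Theorem~\ref{t:Donghyun}.
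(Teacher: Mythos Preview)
Your treatment of the first three claims is adequate and matches the spirit of the paper, which likewise does not spell these out. The gap is in your handling of the semantic separation.

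The obstacle you anticipate is real, and your proposed resolution does not close it. Treating the $\ell$-outputs $Z_v$ at all internal nodes as independent indeterminates and applying Fact~\ref{f:SchwartzZippel} once to the product of all difference polynomials does yield an assignment $(n,(z_v)_v)$ making all the \emph{inputs} $q_w(\vec z)$ pairwise distinct. But for this assignment to arise from an actual $\ell\in\MONOTONE$ you must have $z_w=\ell\big(q_w(\vec z)\big)$ for every internal $w$, and monotonicity then forces $z_w\leq z_{w'}$ whenever $q_w(\vec z)\leq q_{w'}(\vec z)$. A generic Schwartz--Zippel assignment carries no such order-compatibility. Your final step tries to repair this by re-choosing the $\ell$-outputs as ``strictly increasing fresh naturals,'' but those outputs are precisely the $z_v$'s at which the next level's difference polynomials were already evaluated; changing them invalidates the distinctness guarantee one level up. The argument is circular: the level-$h$ outputs feed the level-$(h{+}1)$ inputs, so you cannot first fix all outputs by a single global Schwartz--Zippel and then reshuffle them for monotonicity.

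The paper's proof avoids this by a genuinely level-by-level construction that carries forward not a single value per node but a large \emph{set} of admissible $\ell$-values. In the inductive step (Lemma~\ref{l:Step}), the domain sets $X_m$ for Schwartz--Zippel at height $d{+}1$ are the still-free sets $Z_k$ left over from height $d$; this yields one assignment $\vec x$ making the height-$(d{+}1)$ inputs $p_k(\vec x)$ pairwise distinct. One then picks fresh disjoint sets $Z_k\subseteq\{L'_d+1,L'_d+2,\ldots\}$, ordered so that $p_k(\vec x)<p_{k'}(\vec x)$ implies $\max Z_k<\min Z_{k'}$, and of cardinality large enough for the \emph{next} level's Schwartz--Zippel application. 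The ordering condition is exactly what makes every choice $\ell\big(p_k(\vec x)\big)\in Z_k$ extend $\ell$ monotonically, and the retained multiplicity is what lets the induction continue. Your single-shot approach collapses this multiplicity prematurely and thereby loses both the monotonicity guarantee and the room needed for the next Schwartz--Zippel step.
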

Theorem~\ref{t:Donghyun} above follows from Theorem~\ref{t:Normal}.
Note that the height of $\calD_P$ according to Definition~\ref{d:Normal}c)
coincides with the nesting depth of $P$, that is,
with $\deg\big(\Deg(P\big)$ according to Proposition~\ref{p:Degree2}d).

%%%%%%%%%%%%%%%%%%%%%%%%%%%%%%%%%%%%%%%
\subsection{Proof of Theorem~\ref{t:Normal}}
\label{ss:Proof}

Here we prove the following statement from Theorem~\ref{t:Normal}:

\begin{rem}
Let $\calD$ be a normalized labelled DAG according to Definition~\ref{d:Normal}a).
Then there exists an assignment $(n,\ell)\in\IN\times\MONOTONE$ 
that makes all second-order polynomial sub-expressions in different nodes evaluate differently:
\[ u\neq v \;\Rightarrow\; P_{\calD,u}(n,\ell)\neq P_{\calD,v}(n,\ell) \enspace . \]
\end{rem}
The proof below constructs ($n\in\IN$ and) increasing $\ell:\IN\nearrow\IN$
on intervals $[L_d,L'_d]\subseteq\IN$ disjoint from $[L_{d+1},L'_{d+1}]$
by induction on the nesting depth $d$ of $\Lambda$ in the intermediate expressions,
more precisely: it considers the second-order polynomials $\calP(\calD,V_d)$ corresponding 
to the set $V_d$ of nodes in $\calD$ having height precisely $d$.
The induction step boils down to the following technical lemma:

\begin{lem}
\label{l:Step}
Let $p_1,\ldots,p_K$ denote pairwise distinct $M$-variate (first-order) polynomials over $\IN$
(which need \emph{not} depend on all variables).
Let $X_1,\ldots,X_M\subseteq\IN$ be finite sets,
each of cardinality $\Card(X_m)>\max_k\deg(p_k)\cdot K\cdot (K-1)/2$.
Abbreviate $\vec X:=\prod_m X_m$,
$L:=\min_m X_m\in\IN$, and $L':=\max_k \max\big\{p_k(x_1,\ldots,x_M):x_m\in X_m\big\}$.
Let $Z_1,\ldots,Z_K\subseteq\IN$ be disjoint finite sets
with $\min_k \min Z_k>L'$. Abbreviate $\vec Z:=\prod_k Z_k$. 

Then there exists a family of nondecreasing partial mappings $\ell=\ell_{\vec z}:[L,L']\cap\IN\nearrow\IN$,
$\vec z\in\vec Z$, and an assignment $\vec x\in\vec X$ such that 
\[ \Big\{ \Big(\ell_{\vec z}\big(p_1(\vec x)\big),\ldots,\ell_{\vec z}\big(p_K(\vec x)\big)\Big) : \vec z\in\vec Z \Big\}
 \;\;=\;\; \vec Z \enspace . \]
\end{lem}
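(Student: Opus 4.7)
The plan is to split the proof into two stages: first fix the point $\vec x\in\vec X$ using Fact~\ref{f:SchwartzZippel}, then craft each $\ell_{\vec z}$ as an explicit nondecreasing step function. For the first stage I apply Fact~\ref{f:SchwartzZippel} to the $K$ pairwise distinct polynomials $p_1,\ldots,p_K$ (each of total degree $\leq d:=\max_k\deg(p_k)$) over the integral domain $\IN$. The hypothesis $\Card(X_m)>d\cdot K(K-1)/2$ is exactly the cardinality bound that Fact~\ref{f:SchwartzZippel} demands, so it delivers $\vec x\in\vec X$ making the values $y_k:=p_k(\vec x)$ pairwise distinct; these all lie in $[L,L']\cap\IN$ by the definition of $L$ and $L'$. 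After simultaneously permuting indices on both the polynomials and the factors $Z_k$ of $\vec Z$ (which merely rearranges coordinates of the ambient product), I may assume without loss of generality that $y_1<y_2<\cdots<y_K$; invoking the chain structure $\max Z_k<\min Z_{k+1}$ supplied by the surrounding induction on nesting depth, I may further assume $Z_1<Z_2<\cdots<Z_K$.

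For the second stage, each $\vec z=(z_1,\ldots,z_K)\in\vec Z$ now automatically satisfies $z_1<z_2<\cdots<z_K$ (the $Z_k$ live in pairwise disjoint, successively higher intervals), and I define $\ell_{\vec z}$ as the nondecreasing step function
\[\ell_{\vec z}(y)\;:=\;z_k\qquad\text{for }y_{k-1}<y\leq y_k,\ k=1,\ldots,K,\]
with the conventions $y_0:=L-1$ and $\ell_{\vec z}(y):=z_K$ on the remainder $(y_K,L']\cap\IN$. This is a nondecreasing partial map on $[L,L']\cap\IN$ satisfying $\ell_{\vec z}(y_k)=z_k$ by construction. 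The assignment $\vec z\mapsto\bigl(\ell_{\vec z}(y_1),\ldots,\ell_{\vec z}(y_K)\bigr)$ is therefore the identity on $\vec Z$, delivering the claimed set equality.

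The hard part will be the compatibility of orderings between the $y_k$ and the $Z_k$: because $\ell_{\vec z}$ must be nondecreasing, only weakly increasing $K$-tuples can be realized, forcing the $Z_k$'s to form a chain $Z_1<\cdots<Z_K$ (after the same reindexing that sorts the $y_k$'s). The lemma's statement leaves this chain implicit, but it is automatically supplied by the outer induction on nesting depth, where successive groups of $Z_k$'s sit inside disjoint and successively higher intervals $[L_d,L'_d]$. Once this ordering assumption is acknowledged, the proof reduces to combining Fact~\ref{f:SchwartzZippel} with the one-dimensional interpolation above; no further induction or combinatorics is needed.
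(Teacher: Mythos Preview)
Your approach coincides with the paper's: apply Fact~\ref{f:SchwartzZippel} to obtain $\vec x\in\vec X$ with pairwise distinct values $p_k(\vec x)$, then set $\ell_{\vec z}\big(p_k(\vec x)\big):=z_k$ for each $k$ independently. The paper's proof is in fact terser than yours---it stops at ``well-define $\ell\big(p_k(\vec x)\big)$ to be any element of $Z_k$'' and does not spell out a step-function extension or the monotonicity argument; the nondecreasing property is secured only in the surrounding induction, via the chain condition (ii) on the $Z_k$'s that you correctly identified as missing from the lemma statement itself.

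One small technical correction: $\IN$ is a semiring, not an integral domain (it lacks additive inverses), so Fact~\ref{f:SchwartzZippel} does not apply directly over $\IN$. The paper handles this by observing that $\IN$ embeds into the integral domain $\IZ$; you should do the same rather than calling $\IN$ an integral domain outright.
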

Indeed, apply Lemma~\ref{l:Step}
to $\big\{p_k:k\big\}:=\big\{q_v:v\in V_{d+1}\big\}$,
the multivariate first-order polynomials associated
(according to Definition~\ref{d:Normal}b)
to the nodes at height $d+1$.
Note that each $v\in V_{d+1}$ 
is connected to at least one $u\in V_{d}$ 
(otherwise $v$ would have height $\leq d$ instead of $d+1$);
hence $q_v$ indeed depends on $Y_u$:
$p_k(\vec x)=q_v(\vec x)\geq\min(x_m:m)\geq L=L_d$
as well as $p_k(\vec x)\leq L'=L'_d$
where $\ell$ has already been defined by induction hypothesis.
Moreover the $p_k$ are pairwise distinct by induction hypothesis 
since $\calD$ is normalized.

Then Lemma~\ref{l:Step} yields 
an assignment $x_m=\ell(\cdots)$ to the variables $Y_u$,
namely among the set $\vec X=\vec X_d$ of possible assignments
as values $\ell\circ q_u$ from various nodes $u$ at heights $\leq d$ by induction hypothesis.

Now choose the sets $Z_k\subseteq\{L'_d+1,L'_d+2,\ldots\}\subseteq\IN$
arbitrarily subject to (i) having sufficiently large
cardinalities $\Card(Z_k)>\max_{j}\deg(p'_{j})\cdot J\cdot (J-1)/2$
and (ii) so that
\[ \forall k,k'\leq K: \quad p_k(\vec x)<p_{k'}(\vec x) \;\Rightarrow\; \max Z_k<\min Z_{k'} \enspace . \]
The latter ensures the extension of $\ell$ still be increasing.
Indeed, Lemma~\ref{l:Step} yields 
various (!) ways of extending $\ell$ from $[L_{1},L'_{1}]\cup\cdots\cup[L_{d},L'_{d}]$ 
to $[L_{d+1},L'_{d+1}]$, where $L_{d+1}:=\min_k \min Z_k$;
and where the set $\vec Z_d=\Big(\ell\big(p_1(\vec x)\big),\ldots,\ell\big(p_K(\vec x)\big)\Big)$
of simultaneously possible values of nodes $v$ at height $d+1$
in turn serves as set of possible assignments $\vec X_{d+1}$ 
to the variables of the pairwise distinct multivariate first-order polynomials 
$\{p'_{j}:j\}=\{q_w:w\in V_{d+2}\}$ associated with the $J$ nodes $w$ at height $d+2$.
\qed

\begin{proof}[Proof of Lemma~\ref{l:Step}]
We record that Fact~\ref{f:SchwartzZippel} 
applies also to the \emph{semi-}ring $\calR:=\IN$,
which embeds into the integral domain $\calR':=\IZ$.
It thus yields an assignment $\vec x\in\vec X$ 
that makes the values $p_k(\vec x)\in[L,L']$ pairwise distinct for $1\leq k\leq K$.
We may therefore well-define $\ell\big(p_k(\vec x)\big)$ to be any element of $Z_k$, for each $k$ independently.
\end{proof}

%%%%%%%%%%%%%%%%%%%%%%
\section{Third-Order Polynomials and their Degrees}
\label{s:Third}

We now climb up one step further in the type hierarchy
and consider polynomial expressions $\frakP=\frakP(N,\Lambda,\calF)$ 
in an additional indeterminate $\calF$
that ranges over the set $\MONOTWO$ of
monotone total operators $\Phi:\Monotone\nearrow\Monotone$.

\begin{rem}
\label{r:Third}
\begin{enumerate}
\item[i)]
Note that the semantics of ordinary polynomials is based on `values',
namely starting with $1$ and $\semantic{N}\in\IN$
and proceeding via $+$ and $\cdot$.
\item[ii)]
The semantics of second-order polynomials maintains that perspective,
but additionally considers (both first and second-order) polynomials 
as (here monotone) mappings 
\[ \semantic{p}\;:\;\IN\;\ni\;\semantic{N}\;\mapsto\; \semantic{p}(\semantic{N})\;\in\;\IN,
\quad\text{and}\quad
\semantic{P}_\ell\;:\;
\IN\;\ni\;\semantic{N}\;\mapsto\; \semantic{P}(\semantic{N},\ell) \;\in\;\IN 
\enspace ,
\]
respectively: namely Definition~\ref{d:Second}
inductively defines $\Lambda(P)$ such that 
its semantics coincides with the composition 
of (monotone) function $\ell:\IN\nearrow\IN$ 
with/after $\semantic{P}_\ell\in\MONOTONE$
parameterized by $\ell\in\MONOTONE$.

And Theorem~\ref{t:Donghyun} justifies identifying 
this semantics with a term, for instance in Proposition~\ref{p:Degree2}c).
Here, $+$ and $\cdot$ are silently `overloaded' to also mean binary addition and multiplication of integer functions, pointwise.
\item[iii)]
This suggests additionally considering (second and third-order) polynomials as (here monotone) operators
\begin{gather*}
\semantic{P}\;:\;\MONOTONE\;\ni\;\ell\;\mapsto\; (\IN\ni n\mapsto\semantic{P}(n,\ell)\in\IN) \;\in\;\MONOTONE,
\quad\text{and}\\
\semantic{\frakP}_\Phi\;:\;
\MONOTONE\;\ni\;\ell\;\mapsto\; (\IN\ni n\mapsto\semantic{\frakP}(n,\ell,\Phi)\in\IN) \;\in\;\MONOTONE
\end{gather*}
and let the semantics of $\calF(\frakP)$ 
coincide with the composition 
of (monotone) operators $\Phi:\MONOTONE\nearrow\MONOTONE$ 
with/after $\semantic{\frakP}_\Phi\in\MONOTWO$ 
parameterized by $\Phi$.
\item[iii)]
Second-order functions $P, Q: \MONOTONE\nearrow\MONOTONE$
compose in two different ways,
$$\ell \mapsto P(Q(\ell)) \quad\text{ or }\quad n, \ell \mapsto P(Q(n,\ell),\ell) .$$
The first one is attained when we compose them the usual manner.
For the second one, if we fix the first-order argument $\ell:\MONOTONE$,
then we attain $P(\ell), Q(\ell): \MONOTONE $.
By composing $P(\ell)$ and $Q(\ell)$, we attain 
$n \mapsto P(Q(n,\ell),\ell) : \MONOTONE$,
which is parameterized by $\ell:\MONOTONE$.
It thereby induces a function of type $\MONOTONE\nearrow\MONOTONE$.
We generalize it to composition of third-order functions
$(\MONOTONE\nearrow\MONOTONE)\nearrow(\MONOTONE\nearrow\MONOTONE)$, 
of which there are three kinds.
We may either 
1) parameterize nothing, the usual composition;
2) parameterize $\phi:\MONOTONE\nearrow\MONOTONE$ 
and then compose two functions of type $\MONOTONE\nearrow\MONOTONE$,
or
3) parameterize both $\phi:\MONOTONE\nearrow\MONOTONE$ and $\ell:\MONOTONE$
and then compose two functions of type $\MONOTONE$.g 
\end{enumerate}
\end{rem}
\noindent
This motivates the following generalization of Section~\ref{s:Second} to order three:

\begin{defi}
\label{d:Third}
\begin{enumerate}
\item[a)]
A (univariate) third-order polynomial 
$\frakP=\frakP(N,\Lambda,\calF)$ over $\IN$
is a well-formed expression over unary $1$ and $N$ 
and over binary $+$ and $\cdot$; moreover,
whenever $\frakP$ is a third-order polynomial,
then so is $\Lambda(\frakP)$;
and finally, and newly, so is $\calF(\frakP)$.
\item[b)] 
Recall compositions 
$\big(\frakP\star\frakQ\big)(N,\Lambda,\calF) =
\frakP\big(\frakQ(N,\Lambda,\calF),\Lambda,\calF\big)$
and %\quad\text{and}\quad
$\big(\frakP\circ\frakQ\big)(N,\Lambda,\calF) =
\frakP\big(N,\frakQ(\cdot,\Lambda,\calF),\calF\big)$
according to Definition~\ref{d:Composition2}.
In addition, let 
\[ \big(\frakP\compthree\frakQ\big)(N,\Lambda,\calF) \;=\; 
\frakP\big(N,\Lambda,\frakQ(\cdot,\cdot,\calF)\big) \]
capture the replacement in $\frakP$ of every 
third-order variable $\calF$ by $\frakP$, by defining inductively 
\begin{align*}
1 \compthree \frakQ &:= 1 \enspace , \\
N \compthree \frakQ &:= N \enspace ,  \\
\Lambda \compthree \frakQ &:= \Lambda \enspace ,  \\
(\frakP_1 + \frakP_2) \compthree \frakQ &:= (\frakP_1 \compthree \frakQ) + (\frakP_2 \compthree \frakQ) \enspace , \\
(\frakP_1 \cdot \frakP_2) \compthree Q &:= (\frakP_1 \compthree \frakQ) \cdot (\frakP_2 \compthree \frakQ) \enspace , \\
\calF(\frakP) \compthree \frakQ &:= \frakQ \circ (\frakP \compthree \frakQ) \enspace .
\end{align*}
\item[c)]
The (third-order) degree of a third-order polynomial 
$\frakP=\frakP(N,\Lambda,\calF)$ over $\IN$
is defined inductively as the following arctic
second-order polynomial $\DEG\big(\frakP\big)=\DEG\big(\frakP\big)(D,\Delta)$:
\begin{gather*}
\DEG(1):=0, \quad \DEG(N):=1, \\ 
\DEG(\frakP+\calQ):=\max\{\DEG(\frakP),\DEG(\calQ)\}, 
\quad \DEG(\frakP\cdot\calQ):=\DEG(\frakP)+\DEG(\calQ) \\[0.5ex]
\DEG\big(\Lambda(\frakP)\big)\;:=\;D\cdot\DEG(\frakP), \qquad
\DEG\big(\calF(\frakP)\big)\;:=\;\Delta\big(\DEG(\frakP)\big)
\enspace .
\end{gather*}
\item[d)]
An \emph{arctic} multivariate second-order polynomial $\arctic{P}=\arctic{P}(\vec D,\vec\Delta)$ 
over an ordered semi-ring $\calR$,
in first-order variables $\vec D=(D_1,\ldots,D_M)$ ranging over $\calR^M$ 
and in second-order variables $\Delta_m$ ranging over monotone total $\delta_m:\calR^{M_m}\to\calR$,
is a well-formed term over $\vec D$ and $\vec\Delta$, $+$ and $\cdot$ and $\max()$.
\end{enumerate}
\end{defi}

\begin{rem}
\label{r:Curry}
Semantically, an operator $\Phi\in\MONOTWO$ may be identified, via currying,
with the \emph{mixed} monotone total function\emph{al}
$\Phi:\IN\times\Monotone\nearrow\IN$ and vice versa---but 
not with a \emph{pure} functional $\Phi':\Monotone\nearrow\IN$; cmp. \cite{Matthias}.

However Definition~\ref{d:Third} pertains syntactically
to ($\ell=\semantic{\Lambda}$ and) $\Phi=\semantic{\calF}$ as endomorphisms
and hence prohibits expressions like $\calF(N,\Lambda)$ or $\calF\big(\Lambda\big)(N+1)$,
unlike Lambda Calculus.
\end{rem}

\begin{thm}
\label{t:Third}
\begin{enumerate}
\item[a)]
Let $\frakP_1(N,\Lambda,\calF),\ldots,\frakP_K(N,\Lambda,\calF)$
denote syntactically pairwise non-equivalent third-order polynomials over $\IN$.
Then there exists an assignment $n\in\IN$ and 
$\ell\in\!\MONOTONE$ and $\Phi\in\!\MONOTWO$ that makes
$\semantic{\frakP}_k(n,\ell,\Phi)$ evaluate pairwise distinctly for all $k<k'$.
\item[b)]
For $\frakP=\frakP(N,\Lambda,\calF)$ and $\frakQ=\frakQ(N,\Lambda,\calF)$ third-order polynomials over $\IN$,
$\frakP(\frakQ,\Lambda,\calF)$ is again a third-order polynomial over $\IN$ with semantics
$\semantic{\frakP\big(\frakQ,\Lambda,\calF\big)}(n,\ell,\Phi)=\semantic{\frakP}\big(\semantic{\frakQ}(n,\ell,\Phi),\ell,\Phi\big)$
for all $n\in\IN$ and all $\ell\in\MONOTONE$ and all $\Phi\in\!\MONOTWO$. Furthermore
\[ \DEG\Big(\frakP\big(\calQ(N,\Lambda,\calF),\Lambda,\calF\big)\Big)(D,\Delta) \;=\; \DEG\big(\frakP\big)(D,\Delta) \;\cdot\; \DEG\big(\calQ\big)(D,\Delta) \enspace . \]
\item[c)]
For $\frakP=\frakP(N,\Lambda,\calF)$ and $\frakQ=\frakQ(N,\Lambda,\calF)$ third-order polynomials over $\IN$,
$\frakP(N,\frakQ,\calF)$ is again a third-order polynomial over $\IN$ with semantics
$\semantic{\frakP\big(N,\frakQ,\calF\big)}(n,\ell,\Phi)=\semantic{\frakP}\big(n,n'\mapsto\semantic{\frakQ}(n',\ell,\Phi),\Phi\big)$.
Furthermore
\[ \DEG\big(\frakP(N,\frakQ,\calF)\big)(D,\Lambda)
\;=\; \DEG\Big(\frakP\Big)\big(\DEG(\calQ)(D,\Lambda),\Lambda\big) \enspace . \]
\item[d)]
For $\frakP=\frakP(N,\Lambda,\calF)$ and $\frakQ=\frakQ(N,\Lambda,\calF)$ third-order polynomials over $\IN$,
$\frakP(N,\Lambda,\frakQ)$ is again a third-order polynomial over $\IN$ with semantics
$\semantic{\frakP\big(N,\Lambda,\frakQ\big)}(n,\ell,\Phi)=\semantic{\frakP}\Big(n,\ell,\ell'\mapsto\big(n'\mapsto\semantic{\frakQ}(n',\ell',\Phi)\big)\Big)$.
Furthermore 
\[ \DEG\big(\frakP(N,\Lambda,\frakQ)\big)(D,\Lambda)
\;=\; \DEG\Big(\frakP\Big)\big(D,\DEG(\frakQ)\big) \enspace . \]
\item[e)]
To any arctic univariate second-order polynomial $\arctic{P}=\arctic{P}(D,\Delta)$ over $\IN$,
there exists some $d_0\in\IN$ and $\delta_0\in\MONOTONE$ and some unique
(non-arctic) second-order polynomial $P=P(D,\Delta)=:\lim\arctic{P}(D,\Delta)$ over $\IN$ such that,
for all $d\in\IN$ and all $\delta\in\MONOTONE$ with $d\geq d_0$ and $\delta\geq\delta_0$ pointwise,
it holds $\semantic{\arctic{P}}(d,\delta)=\semantic{P}(d,\delta)$.  
\item[f)]
$\deg\Big(\lim\Deg\big(\lim\DEG(\frakP)\big)\Big)\in\IN$
%is well-defined and 
coincides with the nesting depth of $\calF$ in $\frakP$.
\end{enumerate}
\end{thm}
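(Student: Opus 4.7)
The plan is to dispatch the six parts in the order (b)--(d), then (a), then (e), and finally (f) as a quick consequence of (e) together with Proposition~\ref{p:Degree2}c). Parts (b)--(d) proceed by structural induction on $\frakP$, mirroring the proof of Proposition~\ref{p:Degree2}, with exactly one new case, $\frakP=\calF(\frakP')$, handled by the rule $\DEG(\calF(\frakP'))=\Delta(\DEG(\frakP'))$. In (b) and (c), where the third-order variable $\calF$ is not being substituted, $\Delta$ passes cleanly through the induction hypothesis. In (d), substitution $\calF\mapsto\frakQ$ is precisely what the inductive definition of $\compthree$ in Definition~\ref{d:Third}b) tracks, and the $\calF(\frakP')$-case reduces, by induction hypothesis and the recursive definition of $\DEG(\frakQ)$ applied pointwise, to the claimed identity $\DEG(\frakP\compthree\frakQ)(D,\Delta)=\DEG(\frakP)(D,\DEG(\frakQ))$.

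For part (a), I would extend the labelled-DAG normal form of Theorem~\ref{t:Normal} by a third layer of nodes labelled with $\calF(\cdots)$-expressions, whose children comprise $\Lambda$-nodes and first-order leaves. The inductive construction of a separating assignment then stratifies by $\calF$-nesting depth (outer) and $\Lambda$-nesting depth (inner). At each outer step one extends a partial monotone operator $\Phi$ by specifying its values on finitely many previously constructed monotone functions $\ell_j\in\MONOTONE$, so as to separate the finitely many second-order polynomials appearing as arguments of distinct $\calF$-applications; this is the third-order analogue of Lemma~\ref{l:Step}, with Fact~\ref{f:SchwartzZippel} replaced by Theorem~\ref{t:Donghyun}. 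The remaining obstacle is to extend such a partial $\Phi$ to a total operator in $\MONOTWO$, which is feasible since $\MONOTONE$ is a lattice under pointwise $\leq$ and one may interpolate by, for example, $\Phi(\ell):=\sup\{\Phi(\ell_j):\ell_j\leq\ell\}$ plus a monotone correction ensuring each image lies in $\MONOTONE$.

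Part (e) is the main technical obstacle. I would structurally induct on $\arctic{P}$; the cases $0,1,D,+,\cdot$, and $\Delta(\cdot)$ commute with $\lim$ routinely, using monotonicity of $\delta$ for the last. The hard case is $\max(\arctic{P}_1,\arctic{P}_2)$: one must choose $d_0$ and a sufficiently fast-growing $\delta_0$ so that one of $P_1:=\lim\arctic{P}_1$ and $P_2:=\lim\arctic{P}_2$ dominates the other \emph{pointwise} on the domain $\{d\geq d_0,\,\delta\geq\delta_0\}$. The difficulty is that Theorem~\ref{t:Donghyun} only provides semantic non-equivalence, not uniform dominance; in general two second-order polynomials need not be comparable over arbitrary monotone $\delta$. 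My plan is to introduce a strict well-ordering on pure second-order ``monomials'' --- lexicographically by $\Lambda$-nesting depth, then leading first-order degree, then recursively on sub-expressions --- and to show that for $\delta_0$ chosen to grow faster than any iterate of the $\Lambda$-applications occurring in $\arctic{P}$, this ordering becomes a strict asymptotic dominance: the monomial of higher profile outgrows all others pointwise. The desired $P=\lim\arctic{P}$ is then the unique maximal such monomial arising in the disjunctive-max expansion of $\arctic{P}$.

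Part (f) then follows quickly. By part (e), $P:=\lim\DEG(\frakP)$ is a well-defined second-order polynomial in variables $(D,\Delta)$. Applying Proposition~\ref{p:Degree2}c) to $P$, the quantity $\deg(\lim\Deg(P))$ coincides with the nesting depth of the second-order variable $\Delta$ in $P$. It therefore remains to match this depth against the nesting depth of $\calF$ in $\frakP$. Structural induction on $\frakP$ shows that the only rule of $\DEG$ introducing $\Delta$ is $\DEG(\calF(\frakP'))=\Delta(\DEG(\frakP'))$, so the maximum $\Delta$-nesting in $\DEG(\frakP)$ equals the maximum $\calF$-nesting in $\frakP$; and the $\lim$ step preserves this maximum, since by the profile introduced in the proof of part (e), deeper $\Delta$-nesting strictly dominates and is therefore the term selected by $\lim$.
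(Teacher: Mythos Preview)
The paper itself supplies essentially no proof of Theorem~\ref{t:Third}: apart from the single sentence that part~(a) ``employs a generalization of Definition~\ref{d:Normal} as normal form for third-order polynomials: a DAG whose internal nodes are labelled with second-order polynomial arguments to~$\calF$'', nothing is argued for (b)--(f). Your plan for~(a) matches that hint, so there the comparison is favourable. For the remaining parts there is no paper proof to compare against; assessing the proposal on its own terms, there are genuine gaps.

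In~(b) the assertion that for $\frakP=\calF(\frakP')$ the variable ``$\Delta$ passes cleanly through the induction hypothesis'' is not correct. Unwinding the definitions gives
\[
\DEG\big(\calF(\frakP')\star\frakQ\big)
\;=\;\DEG\big(\calF(\frakP'\star\frakQ)\big)
\;=\;\Delta\big(\DEG(\frakP')\cdot\DEG(\frakQ)\big),
\]
whereas the asserted right-hand side is $\Delta\big(\DEG(\frakP')\big)\cdot\DEG(\frakQ)$. These differ because $\Delta$ is a function symbol, not a multiplicative scalar like~$D$ in the proof of Proposition~\ref{p:Degree2}a). Concretely, $\frakP=\calF(N)$ and $\frakQ=N^{2}$ give $\frakP\star\frakQ=\calF(N^{2})$ with $\DEG=\Delta(2)$, while $\DEG(\frakP)\cdot\DEG(\frakQ)=2\,\Delta(1)$; for $\delta(n)=n^{2}$ these evaluate to $4$ and~$2$. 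The same obstruction propagates into the $\Lambda$-case of~(c), which uses~(b) via $\Lambda(\frakP')\circ\frakQ=\frakQ\star(\frakP'\circ\frakQ)$: with $\frakP=\Lambda(N^{2})$ and $\frakQ=\calF(N)$ one computes $\frakP\circ\frakQ=\calF(N^{2})$ of degree $\Delta(2)$, but the claimed right-hand side $\DEG(\frakP)\big(\DEG(\frakQ),\Delta\big)$ equals $2\,\Delta(1)$. So the structural inductions for (b) and~(c) do not close as written; at minimum the identities need a weaker reading than literal equality of arctic second-order polynomials.

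For~(e) you correctly isolate $\max$ as the crux, but the proposed well-ordering does not yield the required pointwise dominance. Consider
\[
\arctic{P}\;=\;\max\big(D\cdot\Delta(1),\,\Delta(D)\big)
\;=\;\DEG\big(\Lambda(\calF(N))+\calF(\Lambda(N))\big).
\]
Both branches have $\Delta$-nesting depth~$1$, and your tiebreak by ``leading first-order degree'' also ties (both collapse to~$D$ when $\delta$ is the identity). More seriously, for \emph{every} choice of $d_{0},\delta_{0}$ both branches are realised on $\{d\geq d_{0},\ \delta\geq\delta_{0}\}$: taking $\delta(n)=\max(\delta_{0}(n),C)$ with $C>\delta_{0}(d)$ and any fixed $d\geq 2$ gives $d\cdot\delta(1)=dC>C=\delta(d)$; taking instead $\delta(n)=\delta_{0}(n)+2^{n}$ and $d$ large gives $\delta(d)>d\cdot\delta(1)$. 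Thus no monomial profile can select a single branch valid across the whole region, and the argument for~(e)---and hence your reduction for~(f)---does not go through as stated.
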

The degree thus transforms 
the three kinds of composition of third-order polynomials
as multiplication and the two kinds of composition of second-order polynomials
from Proposition~\ref{p:Degree2}, respectively.

The proof of Theorem~\ref{t:Third}a) employs a generalization of
Definition~\ref{d:Normal} as normal form for third-order polynomials:
a DAG whose internal nodes are labelled with second-order polynomial arguments to $\calF$.

\begin{rem}
\label{r:Growth}
Justified by Theorem~\ref{t:Third}e),
let us call arctic first-order polynomial \linebreak[4] 
$\Deg\circ\lim\DEG(\frakP)$ 
the \emph{double degree} of the third-order polynomial $\frakP$.
The variously detailed specifications of growth from Figure~\ref{f:Hierarchy} 
now extend to include specifying said arctic double degree
and the \emph{asymptotic} double degree, respectively.
\end{rem}

%%%%%%%%%%%%%%%%%%%%%
\section{Conclusion}
\label{s:Conclusion}

Second-order polynomial runtime/space generalizes
classical complexity classes and reductions to measure
the `size' of functionals and operators \cite{Kapron}
for instance in Analysis \cite{Kawamura} in dependence
on an additional function-type variable $\Lambda$.
Like polynomial degrees quantitatively refine qualitative polynomial growth,
second-order degrees stratify second-order polynomials.
Second-order polynomial degrees
are in turn classical (\ie first-order) polynomials,
but additionally involving $\max()$---and now respecting 
both types of second-order polynomial composition;
recall Proposition~\ref{p:Degree2}.

Theorem~\ref{t:Donghyun} has extended classical semantic `completeness'
of syntactic Commutative and Associative and Distributive Laws 
from ordinary multivariate to second-order univariate polynomials.
Along the way, we have established and used a normal form for second-order polynomials over $\IN$:
based on certain `normalized' DAGs over sub-expressions of the form $\Lambda(\cdots)$
with some multivariate first-order polynomial as argument.
Like (shortest) straight-line programs,
but as opposed to `pure' expressions and to expression trees,
these contain/calculate repeated sub-expressions only once.

Finally, Definition~\ref{d:Third} has introduced third-order polynomials:
such generalize the second-order case by involving an additional
variable $\calF$ of type operator; and we have generalized
the degree of second-order polynomials 
to that of third-order polynomials:
now as second-order polynomials, again involving $\max()$,
respecting three kinds of composition; 
see Theorem~\ref{t:Third}.

We refrain from spelling out 
definitions and properties of fourth and higher order polynomials and degrees.

\begin{rem}
\label{r:Multivar}
Theorem~\ref{t:Donghyun}, the normal form Definition~\ref{d:Normal}, Proposition~\ref{p:Degree2},
and our proofs generalize from `univariate' second-order polynomials in $N\in\IN$ and $\Lambda\in\MONOTONE$
to multivariate $\vec N=(N_1,\ldots,N_M)\in\IN^M$ and to variable\emph{s} $\Lambda_m$ for several monotone $\ell_m:\IN\nearrow\IN$.
The (total) second-order degree then becomes a multivariate (arctic) first-order polynomial,
where Equation~\eqref{e:Degree2} in Definition~\ref{d:Degree2} becomes
\begin{equation}
\label{e:Degree3}
\Deg\big(\Lambda_m(P)\big):=D_m\cdot\Deg(P), \qquad \Deg(N_m):=1
\enspace .
\end{equation}
Section~\ref{s:Third} on (univariate) third-order polynomials and degrees
similarly generalizes to several variables $\calF_m$ for 
monotone operator\emph{s} $\Phi_m:\Monotone\nearrow\Monotone$.
\end{rem}
Next one might look into the case of second-order polynomials
over (several first-order variables $\vec N$ and) 
at least one second-order variable $\Lambda$ as placeholder
for a \emph{multi}variate monotone function $\ell:\IN^m\nearrow\IN$.

\medskip
Theorem~\ref{t:Donghyun} seems likely to generalize from natural numbers $\IN$
to integers $\IZ$ with (possibly \emph{non-}monotone) $\ell:\IZ\to\IZ$.
Our proof in Subsection~\ref{ss:Proof} however heavily exploits monotonicity/absence of subtraction.

%%%%%%%%%%%%%%%%%%%%%%%%%%%%%%%%%%%%%%%
%\section*{Bibliography}

\end{document}